\newcommand{\vast}{\bBigg@{4}}
\newcommand{\Vast}{\bBigg@{5}}
\newtheorem{theorem}{Theorem}[]
\newtheorem{lemma}[]{Lemma}
\theoremstyle{definition}
\begin{document}
\bstctlcite{IEEEexample:BSTcontrol}
	\title{Cooperative relaying in a SWIPT network: Asymptotic analysis using extreme value theory for non-identically distributed RVs}
	\author{Athira Subhash, Sheetal Kalyani \\
		\hspace{-0.5 cm}Department of Electrical Engineering,\\
		\hspace{-0.8 cm} Indian Institute of Technology, Madras, \\
		\hspace{-1cm} Chennai, India 600036.\\
		\hspace{-1cm} \{ee16d027@smail,skalyani@ee\}.iitm.ac.in\\
	}
\maketitle
\begin{abstract}
This paper derives the distribution of the maximum end-to-end (e2e) signal to noise ratio (SNR) in an opportunistic relay selection based cooperative relaying (CR) network having large number of non-identical relay links between the source and destination node. The source node is assumed to be simultaneous wireless information and power transfer (SWIPT) enabled and the relays are capable of both time splitting (TS) and power splitting (PS) based energy harvesting (EH). Contrary to the majority of literature in communication, which uses extreme value theory (EVT) to derive the statistics of extremes of sequences of independent and identically distributed (i.i.d.) random variables (RVs), we demonstrate how tools from EVT can be used to derive the asymptotic statistics of sequences of independent and non-identically distributed (i.n.i.d.) normalised SNR RVs and {hence characterise the distribution of the maximum e2e SNR RV.}  Using these results we derive the expressions for ergodic and outage capacities at the destination node. Finally, we present the utility of the asymptotic results for deciding the optimum TS and PS factors of the hybrid EH relays that (i) minimise outage probability and (ii) maximise ergodic capacity at the destination. Furthermore, we demonstrate how stochastic ordering results can be utilised for simplifying these optimisation problems.
\end{abstract}

\begin{IEEEkeywords}
cooperative relaying, energy harvesting, extreme value theory, non identical links, opportunistic scheduling
\end{IEEEkeywords}

\section{Introduction}
\par Wireless power transfer (WPT) has been an active area of research since the early 20-th century \cite{garnica2013wireless}. Recent technological advancements have further triggered an interest in exploring viable applications of WPT in many diverse fields \cite{perera2017simultaneous}. Amongst these applications, the idea of simultaneous wireless information and power transfer (SWIPT) has received much focus both from the industry and academia recently \cite{perera2017simultaneous,krikidis2014simultaneous}. As the name suggests, SWIPT aims at transporting power using information signals and thus allows nodes in a wireless network to harvest energy from any incoming information signal. The idea of energy harvesting (EH) nodes in wireless communication brings out the possibility of self-sustaining nodes and is considered as a reliable and viable alternative to battery-powered wireless nodes. Practical receiver architectures for wireless information and power transfer are studied in \cite{zhou2013wireless}. The theoretical and practical aspects of SWIPT are studied for several applications such as cognitive radio \cite{ren2018rf}, wireless sensor networks \cite{liu2017rf}, personal area networks \cite{jiang2016social}, device to device communication \cite{yang2014green}, cooperative relaying (CR) \cite{liu2019cooperative}  and many more.
\par Recently, CR schemes with  EH nodes have gained significant attention \cite{tran2018rf,hossain2019survey}. CR has been identified as one of the promising technologies capable of addressing issues like fading, poor coverage, increased power consumption etc. \cite{gomez2011survey}. The benefits of CR come from the spatial diversity achieved by the virtual antenna array created by the cooperating nodes, which relay information between the source and the destination. Unlike co-located antenna arrays, the components of the virtual array have a dynamic nature depending upon the state and availability of the cooperating nodes. The authors of \cite{adinoyi2009performance,liu2019cooperative,gomez2011survey,zhang2017performance,liu2016wireless} and the references therein analyse different aspects of the CR schemes. These studies establish the merits offered by CR over other non-cooperative methods of communication. Self-sustainable CR can be realised with EH in the source or relay nodes of the CR network, thus reaping the benefits of both SWIPT and CR schemes together. Here, the source (or the relays) which has continuous access to an energy source transfer RF energy to the energy-starved relay (or the source).
\par Several recent works study the performance and resource allocation aspects of CR coupled with SWIPT nodes. The authors of \cite{hadzi2015rate,kumar2019performance} analyses the performance of dual-hop CR networks where the source is capable of harvesting energy from a single antenna relay node, whereas \cite{kamga2017relay} considers a multi-antenna EH relay node operating in millimetre wavebands. While \cite{kumar2019performance} analyses the average symbol error rate at the destination, \cite{kamga2017relay} studies the asymptotic (in terms of the number of relay antennas) energy harvested, spectral efficiency and system throughput at the destination. The secrecy performance and the optimal choices of system parameters to maximise rate of a multi-relay CR system with SWIPT enabled source nodes is studied in \cite{liu2016wireless} and \cite{xing2016wireless} respectively. Energy transfer in any SWIPT system follows one of the three EH protocols, namely, power splitting (PS) or time switching (TS) \cite{nasir2013relaying,zhang2013mimo} or a hybrid of the two \cite{atapattu2016optimal}. Energy received over a certain fraction of a time slot is used for EH and information processing (IP) is performed over the rest of the time slot in the TS protocol. Whereas in the PS protocol, a fraction of the received energy is used for EH and the rest for IP. The hybrid protocol performs both TS and PS in all the time slots. The choice of EH protocol depends on the system hardware constraints and in turn, decides the system performance. For example, in an EH system with TS protocol, the authors of \cite{hadzi2015rate,ghosh2019outage} discuss how to choose the optimal time fraction for EH such that rate maximisation and outage minimisation are respectively achieved. The authors of \cite{nguyen2018performance,zhong2018outage} study the performance of PS protocol in terms of outage probability, system throughput etc. Similarly, \cite{alsharoa2017optimization,malik2018optimizing} discuss the selection of optimal PS factor for different relay systems. Algorithms to arrive at the optimal PS factors in an multi relay assisted two hop CR communication network is studied in \cite{liu2016wireless}. Both the TS and PS protocols can be derived as special cases of the hybrid protocol and hence from an analysis perspective, the hybrid protocol is of particular interest. Therefore, in this work, we study the performance of a CR system with hybrid EH relay nodes harvesting energy from the information signal sent by the source node.
\par The first hop of a dual-hop CR communication involves information transmission from the source to the relays and then some or all of the relays forward this data to the destination in the second hop. Selecting a single relay for transmission in the second hop is an effective method to enhance the end-to-end (e2e) performance of relaying systems \cite{xia2015fundamental,li2017wireless} while keeping the decoding complexity at the destination minimal. Hence, several works like \cite{liu2019cooperative,medepally2010voluntary,bletsas2007cooperative,chen2010approximate} study the performance of CR where the relay with the largest e2e signal to interference plus noise ratio / signal to noise ratio (SNR) is chosen for information transmission to the destination. The asymptotic performance of the system when the number of relays grows to infinity facilitates easy comparison of the performance with respect to variations in other system parameters. For example, the authors of \cite{liu2019cooperative,medepally2010voluntary,song2006asymptotic,al2018secrecy} rely on asymptotic analysis for studying the system performance for diverse applications. Hence, in this work, we focus on the asymptotic e2e SNR of an EH-CR system where a large number of EH relays are available between the source and the destination node. Though the analysis is asymptotic, the results in Section \ref{Simulation} shows that they hold fairly well even in systems with a moderate number of relays between the source and the destination.
\par Extreme value theory (EVT) is a branch of statistics dealing with the asymptotics of extreme events (events with the extreme deviations from the median of probability distributions) \cite{charras2013extreme}. Tools from EVT has been efficiently used for solving several problems in wireless communication as well \cite{bennis2018ultrareliable,song2006asymptotic,haider2015spectral,al2018secrecy,ji2010capacity,liu2019cooperative,subhash2020transmit,kalyani2012asymptotic,subhash2019asymptotic,kalyani2012analysis}. Recently, \cite{liu2019cooperative} used EVT to analyse the asymptotic throughput of an opportunistic relay selection system when the relays are capable of harvesting energy from the desired signal as well as interferer signals. One important factor to notice in all these seminal works is that EVT has been used to derive the statistics of extremes over sequences of independent and identically distributed (i.i.d.) random variables (RVs). To the best of our knowledge, there is no previous work using EVT to derive the asymptotic statistics of extremes of a sequence of independent and non-identically distributed (i.n.i.d.) SNR RVs. While \cite{kalyani2012analysis} derives the pdf of the maximum of i.n.i.d. generalized-K variates using EVT, the pdf of each of the RV differ only by their mean values in this work. Hence, the sequence of i.n.i.d. RVs was easily transformed into a sequence of i.i.d. RVs by taking the difference of each RV with the common mean value. Thus, the analysis for i.i.d. RVs was used for the analysis of the extreme values of the sequence here. Several works like \cite{liu2019cooperative,xia2013spectrum,oyman2010opportunistic,xue2010} assume statistically identical source to relay and relay to destination links when analysing opportunistic relay selection schemes in CR models. However, note that each of the relay can be present at a different location with respect to the source node. Thus, the signal received at the different relays experience independent and non-identical path loss effects owing to the differences in path lengths. Similarly, the channel gain over each of the relay to destination links will also be i.n.i.d..Therefore it is imperative that one uses EVT and takes into account the i.n.i.d. nature of the relay links. {Hence, in this work, we use EVT to derive the distribution of the maximum e2e SNR in a dual-hop CR scenario with opportunistic relay selection and large number of non-identical links over the EH relay nodes.} We further highlight the need for a specific analysis of the statistics of the maximum of i.n.i.d. RVs using an example in Section.\ref{Simulation}.
\par Although the classical Fisher–Tippett theorem in EVT was proposed in the year 1928, the first work discussing the order statistics of sequences of i.n.i.d. RVs was published only 40 years later by Mejzler \cite{mejzler1969some}. The typical approach in identifying the asymptotic distribution of the maximum or the minimum over a sequence of normalised i.i.d. RVs using EVT includes the test to identify the maximum domain of attraction (MDA) \footnote{It is known that under certain conditions, the asymptotic distribution of the maximum or minimum RV of a sequence of normalised i.i.d. RVs $\{X_1,X_2,\cdots,X_n\}$ will only be one of the three extreme value distributions (EVD) (Frechet, Gumbel or Weibull). A distribution function $F$ is said to belong to the MDA of an EVD $G$ if $F^n(a_nx+b_n) = \mathbb{P}\left(\frac{\max\{X_1,\cdots,X_n\}-b_n}{a_n} \leq x \right) \to G(x)$ for some normalising constants $a_n$ and $b_n$.  } of the common distribution function and then finding the parameters of the asymptotic distribution (the normalising constants) \cite{falk2010laws}. The choice of these normalising constants are not unique and there are several common choices for all the possible asymptotic distributions available in literature \cite{de2007extreme,falk2010laws}. However, certain additional technical conditions are required to ensure the convergence of the distribution of the extreme statistic to a non-degenerate distribution function in the case of i.n.i.d. RVs. These conditions require the statistician to make appropriate choices for the normalising constants of the asymptotic distribution.  Although works like \cite{mejzler1969some,sreehari,barakat2002limit} presented conditions under which the asymptotic distribution of the maximum or the minimum of sequences of i.n.i.d. RVs exists, to the best of our knowledge, none of them provided general methods for identifying the specific normalising constants of the corresponding asymptotic distributions. Hence, the key challenge in characterising the distribution of the maximum e2e SNR over non-identical relay links is to identify the normalising constants of the asymptotic distribution. In this work, we derive one choice of normalising constants which enables us to characterise the asymptotic distribution of the maximum e2e SNR in a decode and forward (DF) CR system where the relays harvest energy from the source node via hybrid EH protocol. 

\subsection*{Outline}

In this work, we first derive the asymptotic distribution of the normalised maximum e2e SNR with i.n.i.d. source to destination links over dual-hop EH relays. {Using these results we characterise the distribution of the maximum e2e SNR RV.} We present the system model in Section \ref{sysmodel} and the detailed derivation of the asymptotic distribution is discussed in Section \ref{asymp_derive_sec}. The distribution of the e2e SNR is then used to derive the asymptotic ergodic and outage capacities at the destination in Section \ref{erg_cap_out_cap}.  {Next, in Section \ref{ordering_e2esnr} we use results from stochastic ordering to study the ordering of the maximum e2e SNR RV with respect to variations in different system parameters.} The asymptotic results are then used to identify the optimal TS and PS factor at the relays to (i) minimise the outage probability and  (ii) to maximise the ergodic capacity at the destination. Furthermore in Section \ref{opti_probsection}, we demonstrate the utility of the stochastic ordering results in simplifying and speeding up the solution for these optimisation problem. The validity of the results presented throughout the paper are demonstrated through simulation results in Section.\ref{Simulation} and finally, we conclude the work in Section \ref{conclusion}. \\
\textit{Notations:}The notations frequently used in this paper are summarised here. $\mathbb{E}[.]$ denotes expectation, $\mathbb{P}(Y)$ denotes probability of the event $Y$, $f_X(.)$ and $F_X(.)$ represent the probability distribution function (PDF) and cumulative distribution function (CDF) of random variable $X$, respectively. $e^{(.)}$ and $\exp(.)$ represent the exponential function, $E_n(.)$ indicate the exponential integral function \cite[Chapter 5]{iahandbook} and $\mathcal{CN}(\mu, \sigma^2)$ denotes complex Gaussian distribution with mean $\mu$ and variance $\sigma^2$.\\
\textit{Abbreviations:} Here, the frequently used abbreviations are presented in the \textit{Abbreviation-Expansion} format. WPT-Wireless Power Transfer, SWIPT-Simultaneous Wireless Information and Power Tranfer, EH-Energy Harvesting, CR-Cooperative Relaying, PS-Power Splitting, TS-Time Splitting, EVT-Extreme Value Theory, SNR-Signal to Noise Ratio, RV-Random Variable, DF-Decode and Forward, MDA-Maximum Domain of Attraction, i.i.d.-Independent and Identically Distributed, i.n.i.d.-Independent and Non-Identically Distributed and CDF-Cumulative Distribution Function.

\section{System Model} \label{sysmodel}

We consider a dual-hop CR scenario where a source node communicates with a destination node via energy-constrained relays equipped with EH circuitry. Here all the nodes are assumed to be equipped with a single antenna and capable of half duplex communication. The direct link between the source and the destination is assumed to be in permanent outage similar to \cite{liu2019cooperative,bjornson2013new,soliman2013dual}. The $L$ EH relays present between the source and the destination node harvest energy from the source and decode the data for the destination node. The relay which maximises the e2e SNR is then chosen to forward the data to the destination. Furthermore, similar to \cite{liu2019cooperative,kamga2017relay,hadzi2015rate}, we assume that our relays spend all the energy harvested from the source node during the EH phase of one time slot during the to send data to the destination over the information decoding phase. Fig \ref{sys_model} shows such a system model where $S$ represents the source node, $D$ the destination node and $\{R_1,\cdots,R_L\}$ are the $L$ EH relays. Here, $\{g_{1,\ell};\ell=1,\cdots,L\}$ and $\{g_{2,\ell};\ell=1,\cdots,L\}$ represent the small scale fading channel gains of the source to the $\ell^{th}$ relay and the $\ell^{th}$ relay to the destination links respectively. Similarly, $\{d_{1,\ell};\ell=1,\cdots,L\}$ and $\{d_{2,\ell};\ell=1,\cdots,L\}$ represent the distances from the source to the $\ell^{th}$ relay and the $\ell^{th}$ relay to the destination respectively. Furthermore, we assume that all the channels experience independent Rayleigh fading with $g_{i,\ell} \sim \mathcal{CN}(0,1); i\in \{1,2 \}$ and $\ell \in  \{1,\cdots,L \}$. Also, the channel is assumed to remain constant during the transmission of one block of information and it varies independently from one block to another. 
\begin{figure}[t]
	\centering
		\begin{minipage}[t]{0.4\textwidth}
		\centering
		\includegraphics[scale=0.3]{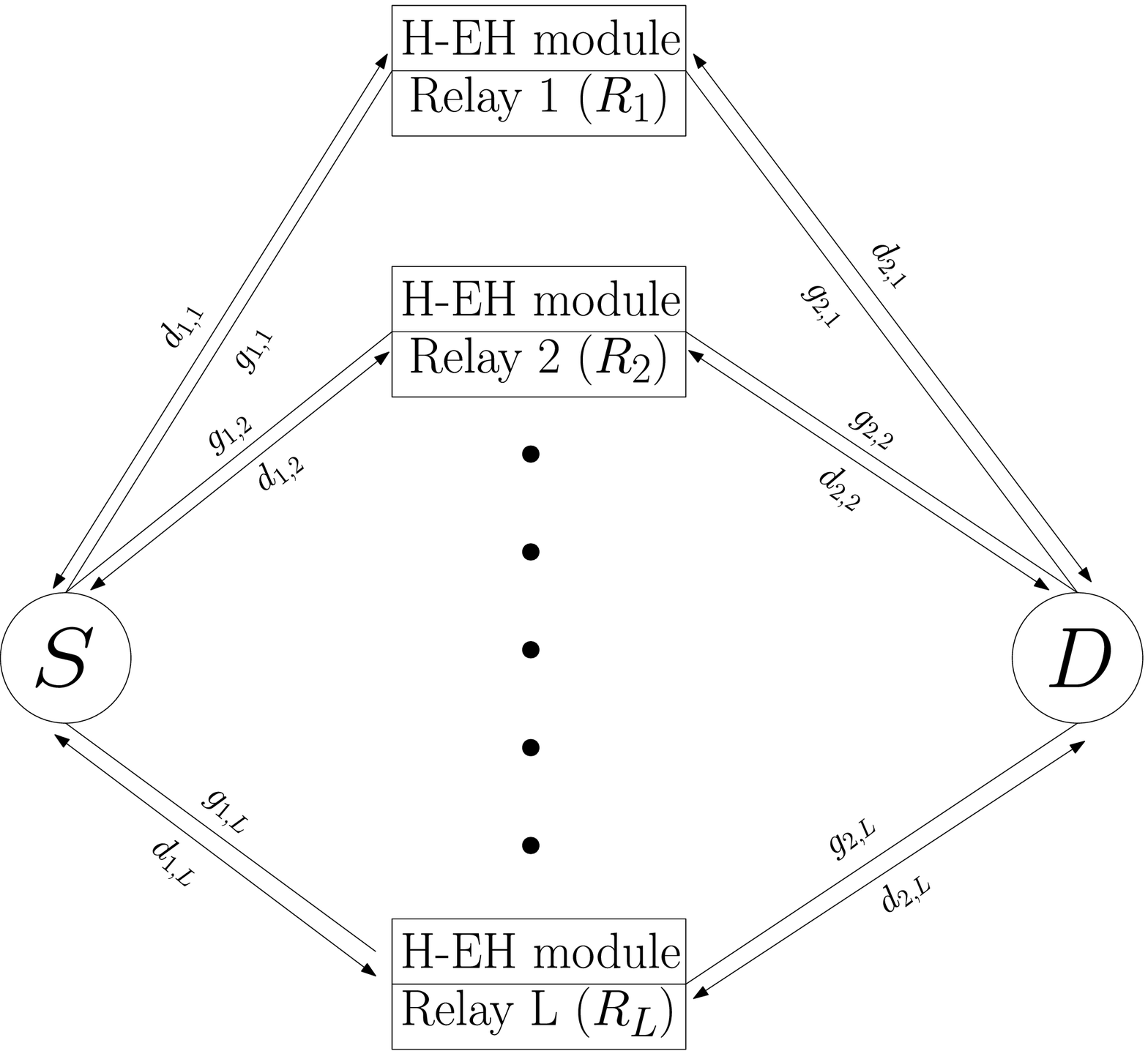}
		\caption{System model.}
		\label{sys_model}
	\end{minipage}
	\begin{minipage}[t]{0.5\textwidth}
		\includegraphics[scale=0.35]{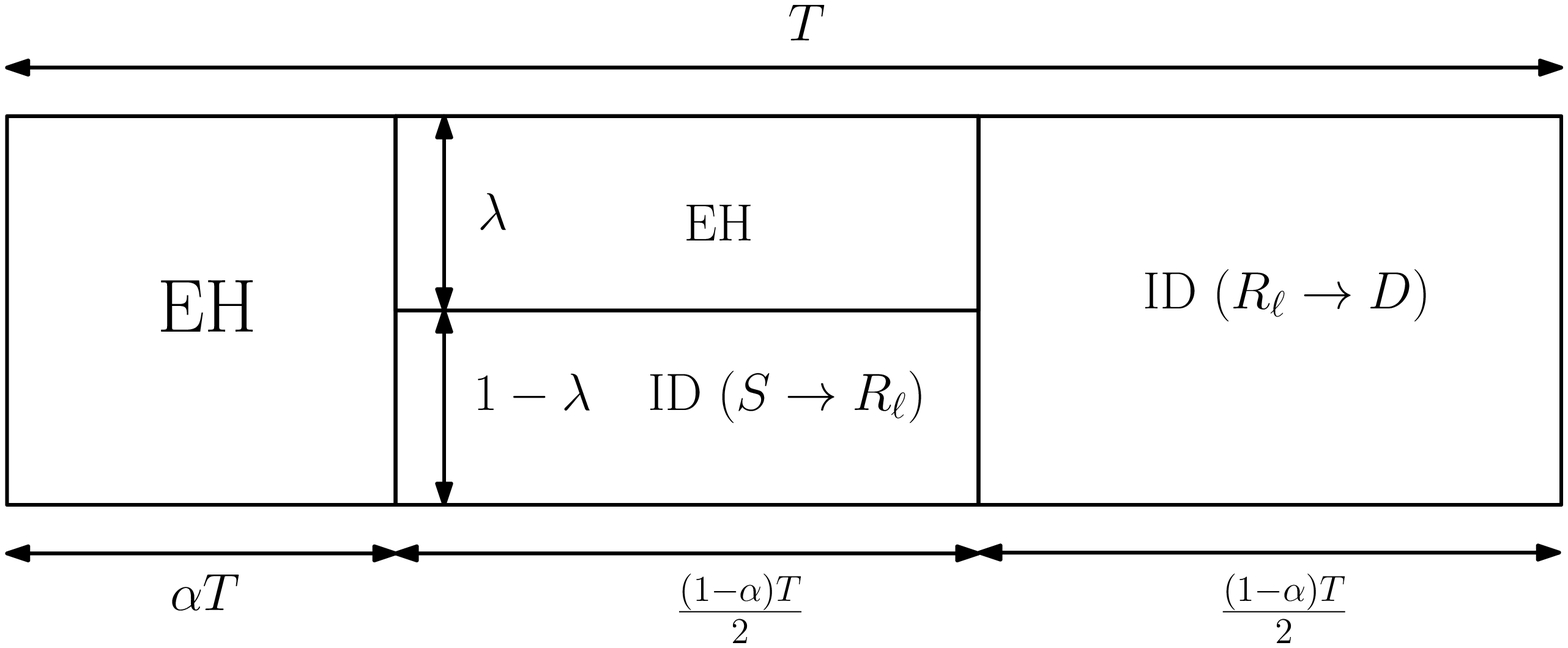}
		\caption{The frame structure of one slot of hybrid
EH protocol.}
		\label{time_frame}		
	\end{minipage}
 \end{figure}

\par As summarised in Fig \ref{time_frame}, data transmission from the source to the destination happens over three phases over a time slot of length $T$. In the first phase, over a duration of $\alpha T$, the source transmits data to all the $L$ relay nodes and the relays harvest this energy. Here, $\alpha$ is the time splitting (TS) factor. The signal received at the $\ell^{th}$ relay during $\alpha T$ is given by
\begin{equation}
    y_{1,\ell} = \sqrt{P_s d_{1,\ell}^{-\zeta}} g_{1,\ell} s + w_{a,\ell}+w_{c,\ell},
\end{equation}
where $P_s$ is the transmit power of the source node, $s$ is the signal transmitted, $\zeta$ is the path loss exponent, $w_{a,\ell}$ is the thermal noise and $w_{c,\ell}$ is the radio frequency (RF) to direct current (DC) conversion noise at the $\ell^{th}$ relay. Over the second phase of duration $\frac{(1-\alpha)T}{2}$, the source continues transmission to the relays. $\lambda$ (power splitting (PS) factor) fraction of the energy received over this phase is harvested and the rest is used for information decoding (ID). Thus the signal used for ID given by
\begin{equation}
    \Tilde{y}_{1,\ell} = \sqrt{1-\lambda} \left( \sqrt{P_s d_{1,\ell}^{-\zeta}} g_{1,\ell} s+ w_{a,\ell} \right)+w_{c,\ell} .
\end{equation}

For analytical tractability, we assume a linear relationship between the received and harvested energy. Such a linear model was considered in several works like \cite{liu2019cooperative,kumar2019performance,hadzi2015rate,medepally2010voluntary,ghosh2019outage,hadzi2015rate}. Thus, the total energy harvested at the $\ell^{th}$ relay over the two phases is given by,
\begin{equation}
    E_\ell = \eta \left(P_s d_{1,\ell}^{-\zeta} |g_{1,\ell}|^2 \right) T \left(\alpha + \lambda \frac{1-\alpha}{2} \right),
\end{equation} where $\eta$ is the efficiency of the EH circuit. In the third phase, a single relay is selected for information transmission to the destination. The relay which maximises the e2e SNR is chosen for transmission in this second hop (opportunistic relay selection) and the relay spends all the energy harvested from the previous phase to send the decoded information to the destination. Similar to \cite{atapattu2016optimal,liu2019cooperative,kumar2019performance}, we assume that the processing power required at the relays is negligible when compared to the power used for signal transmission. Thus, the transmit power available at the $\ell^{th}$ relay is given by 
\begin{equation}
    P_\ell = \frac{2 E_\ell}{(1-\alpha)T}.
\end{equation}
Then, the signal received at the destination from the $\ell^{th}$ relay can be written as
\begin{equation}
    y_{2,\ell} = \sqrt{P_\ell d_{2,\ell}^{-\zeta}} g_{2,\ell} s + w_D, 
\end{equation}where $w_D$ is the total additive noise at the destination (sum of thermal noise and the RF-to-
DC conversion noise). The SNR over the first and second hops of the $\ell^{th}$ relay is thus given by,
\begin{equation}
    \gamma_{1,\ell} = \frac{(1-\lambda)P_s d_{1,\ell}^{-\zeta}|g_{1,\ell}|^2}{\sigma_\ell^2} \ \text{and}
\end{equation}
\begin{equation}
    \gamma_{2,\ell} = \frac{P_\ell d_{2,\ell}^{-\zeta}|g_{2,\ell}|^2}{\sigma_D^2},
\end{equation}respectively. Here, $\sigma_\ell^2=(1-\lambda)\sigma_{a,\ell}^2+\sigma_{\ell,c}^2$ and $\sigma_D^2$ are the noise powers at the $\ell^{th}$ relay and destination respectively. The e2e SNR of the DF network when relay $\ell$ is transmitting in the second hop is defined as
\begin{equation}
    \gamma_{e2e,\ell} = min( \gamma_{1,\ell},\gamma_{2,\ell}).
    \end{equation}
With opportunistic relay selection, the index of the relay transmitting in the second hop can be written as
\begin{equation}
    \hat{\ell} = \underset{\ell=1,\cdots,L}{\text{argmax}} \  \gamma_{e2e,\ell},
\end{equation}
and 
\begin{equation}
\gamma_{e2e,max}^L = \max \{\gamma_{e2e,\ell};\ell=1,\cdots,L \}
\end{equation}
is the corresponding e2e SNR. Note that owing to the path loss component $d_{i,\ell}^{-\zeta}; \ i\in \{1,2\}, \ \ell\in\{1,\cdots,L\}$, the sequence of RVs $\{\gamma_{e2e,\ell} \ \ell\in\{1,\cdots,L\}\}$ are all independent but not identically distributed. In the next section, we derive the asymptotic distribution of $\gamma_{e2e,max}^L$, i.e the distribution of $\lim\limits_{L \to \infty}\gamma_{e2e,max}^L = \gamma_{e2e,max}$. Table \ref{symbols} summarises the frequently used symbols from the above model.

\begin{table}[b]
\begin{tabular}{|l|l|l|l|}
\hline
$S$ & Source node & $\lambda$            & Power Splitting (PS) factor \\ \hline
$D$          & Destination Node                                                                                    & $P_s$                & Transmit power at the source                                                                                                     \\ \hline
$L$          & Number of relays between $S$and $D$                                                                 & $P_\ell$             & Transmit power available at $R_\ell$                                                                                             \\ \hline
$R_\ell$     & $\ell$-th relay between $S$ and $D$                                                                  & $\eta$               & Efficiency of EH circuit at each relay                                                                                           \\ \hline
$g_{1,\ell}$ & Small scale fading gain of the link from $S$ to $R_\ell$ & $\sigma_\ell^2$      & Total noise power at $R_\ell$                                                                                                    \\ \hline
$g_{2,\ell}$ & Small scale fading gain of the link from $R_\ell$ to $D$ & $\sigma_D^2$         & Total noise power at $D$                                                                                                         \\ \hline
$d_{1,\ell}$ & Distance from $S$ to $R_\ell$                                                                       & $\gamma_{1,\ell}$    & SNR over the $S$ to $R_\ell$ link                                                                                                \\ \hline
$d_{2,\ell}$ & Distance from $R_\ell$ to $D$                                                                       & $\gamma_{2,\ell}$    & SNR over the $R_\ell$ to $D$ link                                                                                                \\ \hline
$\zeta$     & Path loss exponent                                                                                  & $\gamma_{e2e,\ell}$  & \begin{tabular}[c]{@{}l@{}}e2e SNR of the DF network\\ (=minimum of $\gamma_{1,\ell}$ and $\gamma_{2,\ell}$)\end{tabular} \\ \hline
$T$          & Length of a time slot                                                                               & $\gamma_{e2e,max}^L$ & Maximum e2e SNR over the $L$ relay links                                                                                  \\ \hline
$\alpha$     & Time Splitting (TS) factor                                                                          & $\gamma_{e2e,max}$   & \begin{tabular}[c]{@{}l@{}}Asymptotic maximum e2e SNR\\ (=$\lim\limits_{L \to \infty}\gamma_{e2e,max}^L$)\end{tabular}    \\ \hline
\end{tabular}
\caption{Table of frequently used symbols}
\label{symbols}
\end{table}

\section{Asymptotic Distribution of end-to-end SNR} \label{asymp_derive_sec}
 
{In this section, we first derive the asymptotic distribution of the normalised maximum e2e SNR, i.e we derive the distribution of $\Tilde{\gamma}_{e2e,max} := \lim\limits_{L \to \infty} \Tilde{\gamma}_{e2e,max}^L = \lim\limits_{L \to \infty} \frac{\gamma_{e2e,max}^L-b_L}{a_L} $ for normalising constants $a_L$ and $b_L$. Furthermore, using this result we study the statistical characterisation of the maximum e2e SNR RV $\gamma_{e2e,max}$.} Note that, here we need to evaluate the distribution of the maximum over a sequence of i.n.i.d. RVs $\{\gamma_{e2e,\ell}\}_{\ell=1}^L$. D.G.Mejzler in \cite{mejzler1969some} studied the asymptotic distribution of the maximal term of a sequence of RVs when each of the elements of the sequence are i.n.i.d.. Barakat et al. in \cite{barakat2013limit} extended this result to the case of random maximum of i.n.i.d. random vectors. The corresponding result is reproduced here for the case of maximum RV evaluated over $L$ univariate RVs. We first begin with the necessary uniformity assumptions (UAs) and then present the key result we utilise in Theorem.\ref{thm_barakat1}.\\
Let $Z_{max}^L=\text{max}\{Z_1,Z_2,\cdots,Z_L\}$ where $Z_\ell \sim F_\ell(z)$ for $\ell=1,\cdots,L$, then the distribution function (df) of $Z_{max}^L$ can be explicitly written as
\begin{equation}
    H_{max}^L(z) = \mathbb{P} \left( Z_{max}^L \leq z\right) = \prod\limits_{\ell=1}^L F_\ell(z) .
\end{equation}
The sequence $\{F_\ell({z})\}$ of dfs and the sequences ${a}_L \geq {0}$ and ${b}_L$ of normalising constants are said to satisfy the UAs for maximum vector ${Z}_{max}^L$ if 
\begin{equation}
\tag{$\mathcal{C}_1$}
    \underset{1 \leq \ell \leq L}{\text{max}}\{1-F_\ell(a_L z+b_L) \} \to 0 \ \text{as} \ L \to \infty,
    \label{ua1}
\end{equation} for all $a_Lz+b_L > \alpha(F_\ell)$ and $\alpha(F_\ell):= \text{inf}\{z:F_\ell(z)>0 \}>-\infty$. Also, for any fixed number $0<t\leq 1$ and each sequence of integers $\{m_L \}_L$ such that $m_L < L$, $m_L \to \infty$ and $\frac{m_L}{L} \to t$ as $L \to \infty$, we have that 
\begin{equation}
\tag{$\mathcal{C}_2$}
    \Tilde{u}(t,z) = \lim\limits_{L \to \infty} \sum\limits_{\ell=1}^{m_L} \left(1-F_\ell(a_Lz+b_L)) \right)
    \label{ua2}
\end{equation} exists and is finite for all $0 \leq t \leq 1$, whenever it is finite for $t=1$. 
These UAs are restrictions on the individual distribution functions $F_{\gamma_{e2e,1}}(\gamma), F_{\gamma_{e2e,2}}(\gamma),\cdots,F_{\gamma_{e2e,L}}(\gamma)$ as well as on the normalising constants
$a_L$ and $b_L$ necessary for non-trivial limit theorems \cite{barakat2013limit}. For instance, $\mathcal{C}_1$ is equivalent to  
\begin{equation}
\underset{1 \leq \ell \leq L}{\max} \mathbb{P} \left(\frac{Z_\ell - b_L}{a_L}  > z\right) \to 0 \ \text{as} \ L \to \infty.
\end{equation} 
Note that $\max\{Z_1,Z_2,\cdots,Z_L\} \to \max\{\omega(F_\ell),1 \leq \ell \leq L\}$ as $L \to \infty$ where $\omega(F_\ell)=\sup \{z: F_\ell(z)<1\}<\infty$. Hence, we need to choose appropriate normalising constants so that the RV $\frac{Z_{max}^L-b_L}{a_L}$ has a non-degenerate distribution when $L \to \infty$. The interested reader is encouraged to go through the proof of the above theorem in \cite{barakat2013limit} to better understand the role of $\mathcal{C}_1$ and $\mathcal{C}_2$ here.
\newline
Under the UA \ref{ua1} and \ref{ua2}, we have the following theorem \cite{barakat2013limit}: 
\begin{theorem} \label{thm_barakat1}
Under the UA \ref{ua1} and \ref{ua2}, a non-degenerate df $\tilde{H}_{max}(z)$ is the asymptotic distribution of $\frac{Z_{max}^L-b_L}{a_L}$ \ i.e 
\begin{equation}
    H_{max}^L\left(a_Lz+b_L \right) = \prod\limits_{\ell=1}^L F_\ell\left(a_Lz+b_L \right) \xrightarrow[]{\text{D}} \Tilde{H}_{max}(z) \ as \ L \to \infty,
\end{equation}where $\xrightarrow[]{\text{D}}$ stands for convergence in distribution if and only if 
\begin{equation}
    \Tilde{u}(z)=\Tilde{u}(1,z)= \lim\limits_{L \to \infty} \sum\limits_{\ell=1}^L \left(1-F_\ell\left(a_Lz+b_L \right) \right) < \infty.
    \label{cdf_form1}
\end{equation}Moreover, $\Tilde{H}_{max}(z)$ should have the form $\Tilde{H}_{max}(z)=e^{-\Tilde{u}(z)} $ and either $(i)$ $\log \Tilde{H}_{max}(z)$ is concave or $(ii)$ $\omega_{max}=\omega\left(\Tilde{H}_{max}(z) \right)$ is finite and $\log \Tilde{H}_{max}\left( \omega_{max}-e^{-z}\right)$ is concave or $(iii)$ $\alpha_{max}=\alpha\left(\Tilde{H}_{max}(z) \right)$ is finite and $\log \Tilde{H}_{max} \left( \alpha_{max}-e^{z}\right)$ is concave where $z>0$ in $(ii)$ and $(iii)$. 
\end{theorem}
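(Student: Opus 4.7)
The plan is to prove the theorem in three stages: first reduce the convergence of the product to the convergence of a sum, then identify the explicit form of the limit, and finally characterise the structural (concavity) constraints on $\tilde{H}_{max}$.

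For the first stage, I would take logarithms and exploit \ref{ua1}. Writing
\begin{equation}
\log H_{max}^L(a_L z + b_L) = \sum_{\ell=1}^L \log F_\ell(a_L z + b_L) = \sum_{\ell=1}^L \log\bigl(1 - (1 - F_\ell(a_L z + b_L))\bigr),
\end{equation}
the uniformity in \ref{ua1} forces each summand to be uniformly small, so the Taylor expansion $\log(1-x) = -x + O(x^2)$ applied term-by-term produces a total error bounded by $\max_\ell (1 - F_\ell(a_L z + b_L)) \cdot \sum_\ell (1 - F_\ell(a_L z + b_L))$, in which the first factor vanishes by \ref{ua1} while the second is bounded whenever $\tilde{u}(z)$ is finite. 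Hence the product converges to a non-degenerate limit if and only if the sum $\sum_\ell (1 - F_\ell(a_L z + b_L))$ converges to some finite $\tilde{u}(z)$, and in that case $\tilde{H}_{max}(z) = e^{-\tilde{u}(z)}$, which establishes both directions of the equivalence and simultaneously pins down the form of the limit.

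Next, to characterise which functions $\tilde{u}$ can arise, I would invoke \ref{ua2} together with a splitting argument. Given $0 < t < 1$ and $m_L < L$ with $m_L/L \to t$, the partial sum over the first $m_L$ indices converges to $\tilde{u}(t, z)$, while the tail converges to $\tilde{u}(z) - \tilde{u}(t, z)$; the former is non-decreasing in $t$ for each fixed $z$, and the corresponding partial products factor as independent "sub-maxima," so $e^{-\tilde{u}(t,z)}$ is itself a distribution function in $z$. This tells us that $\tilde{H}_{max}$ admits, for every $t \in (0,1)$, a multiplicative decomposition $\tilde{H}_{max}(z) = e^{-\tilde{u}(t,z)} \cdot e^{-(\tilde{u}(z)-\tilde{u}(t,z))}$ into two non-degenerate factors, i.e.\ $\tilde{H}_{max}$ is max-infinitely divisible. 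The three cases of the theorem then correspond to the three possible support types of $\tilde{H}_{max}$: (i) support equal to the whole real line, (ii) a finite upper endpoint $\omega_{max}$, or (iii) a finite lower endpoint $\alpha_{max}$; the indicated variable changes $z \mapsto z$, $z \mapsto \omega_{max} - e^{-z}$, $z \mapsto \alpha_{max} - e^{z}$ are precisely those that turn a bounded-support max-infinitely divisible df into one whose log is concave on $\mathbb{R}$.

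The main obstacle, in my view, will be the concavity claim itself. In the classical i.i.d.\ setting it comes for free because the limit is forced into one of the three parametric EVDs, each trivially log-concave under the corresponding transformation; in the i.n.i.d.\ setting $\tilde{u}$ is far more flexible and a genuine argument is required. I would proceed by contradiction: if, in case (i), $-\tilde{u}$ fails to be concave at some $z_1 < z_2$, then one can find a strict midpoint violation and, using the monotonicity and the decomposition $\tilde{u}(z) = \tilde{u}(t,z) + (\tilde{u}(z) - \tilde{u}(t,z))$ guaranteed by \ref{ua2}, exhibit a subsequence for which the rescaled maximum has two coexisting non-degenerate weak limits, contradicting Khintchine's convergence-to-types theorem. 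Cases (ii) and (iii) would be handled by pulling back through the stated change of variable so that the support becomes all of $\mathbb{R}$ and the same contradiction argument applies. A final bookkeeping step is to verify that the chosen $a_L, b_L$ remain compatible with \ref{ua1} and \ref{ua2} after these transformations, so that no spurious degeneracy is introduced.
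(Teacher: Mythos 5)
The paper itself offers no proof of this theorem: it is imported verbatim from the cited reference (Barakat et al./Mejzler), so your attempt is being measured against the literature rather than against an argument in the paper. Your first two stages are essentially the standard ones and are sound. The reduction $\log\prod_\ell F_\ell = \sum_\ell\log\bigl(1-(1-F_\ell)\bigr) = -\sum_\ell(1-F_\ell) + O\bigl(\max_\ell(1-F_\ell)\cdot\sum_\ell(1-F_\ell)\bigr)$ under \ref{ua1} correctly yields the equivalence between convergence of the product and finiteness of $\tilde{u}(z)$, together with $\tilde{H}_{max}=e^{-\tilde{u}}$ (for the converse direction you should note that positivity of the limit bounds $\sum_\ell(1-F_\ell)$ via $1-F_\ell\leq -\log F_\ell$, so the quadratic error is indeed controlled). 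The splitting argument via \ref{ua2} showing that $e^{-\tilde{u}(t,z)}$ is a df and that $\tilde{H}_{max}$ factors for every $t$ is also correct.

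The genuine gap is in your third stage, which is where the entire difficulty of Mejzler's theorem lives. Your claim that a midpoint violation of concavity would ``exhibit a subsequence for which the rescaled maximum has two coexisting non-degenerate weak limits, contradicting Khintchine's theorem'' is not an argument: along a fixed sequence with fixed normalisation the weak limit is automatically unique, so no contradiction can arise that way, and you never construct the two competing normalisations that a types-theorem contradiction would require. The established route is constructive rather than by contradiction: one applies the convergence-to-types theorem to the two convergent sequences $\prod_{\ell=1}^{m_L}F_\ell(a_{m_L}z+b_{m_L})\to\tilde{H}_{max}(z)$ and $\prod_{\ell=1}^{m_L}F_\ell(a_Lz+b_L)\to e^{-\tilde{u}(t,z)}$ to conclude $e^{-\tilde{u}(t,z)}=\tilde{H}_{max}(\alpha_t z+\beta_t)$ for some $\alpha_t>0,\ \beta_t$, and then the monotonicity of $\tilde{u}(t,z)$ in $t$ forces the family $(\alpha_t,\beta_t)$ to be monotone in $t$, from which the concavity of $\log\tilde{H}_{max}$ in the appropriate coordinates ($z$, $\omega_{max}-e^{-z}$, or $\alpha_{max}-e^{z}$ according to the support type) is extracted. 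Crucially, this step needs the slow-variation conditions on the normalising constants ($a_{L+1}/a_L\to 1$, $(b_{L+1}-b_L)/a_L\to 0$, $|\log a_L|+|b_L|\to\infty$, which the paper only introduces later, in the proof of Theorem 2) to guarantee that the subsequence $m_L$ admits compatible constants at all; your final ``bookkeeping step'' gestures at this but does not supply it. Without that machinery the concavity claim, and hence the trichotomy (i)--(iii), is unproved.
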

\begin{proof}
Please refer \cite{barakat2013limit} for the proof.
\end{proof}

From the above theorem, it is clear that if we can {derive normalising constants $a_L$ and $b_L$ satisfying $\mathcal{C}_1$, $\mathcal{C}_2$ and (\ref{cdf_form1}) for $F_\ell(\gamma)=F_{\gamma_{e2e,\ell}}(\gamma)$, then we can arrive upon the asymptotic distribution of the RV $\Tilde{\gamma}_{e2e,max}^L$. }{Hence, we will use Theorem \ref{thm_barakat1} to determine the functional form of the distribution of $\Tilde{\gamma}_{e2e,max}$ and then use that distribution to arrive upon the distribution of $\gamma_{e2e,max}$.}
\newline
We begin by characterising the distribution of $F_{\gamma_{e2e,\ell}}(\gamma)$.
According to the system model discussed in Section II, the channel fading coefficients are Rayleigh distributed and hence $|g_{1,\ell}|^2$ will be exponentially distributed with unit mean \cite[Chapter 5]{molisch2012wireless}. 

Then, by making use of the scaling properties of the exponential random variables, we have $\gamma_{1,\ell}  \sim \text{Exp}(\theta_\ell)$, where $\text{Exp}(\theta)$ represents the exponential distribution with scale parameter $\theta$ and 
\begin{equation}
     {\theta_\ell}  = \frac{\sigma_\ell^2}{(1-\lambda)P_s d_{1,\ell}^{-\zeta}}.
\end{equation}
Similarly, note that $  \gamma_{2,\ell}=  \gamma_{1,\ell} \times \varphi_{2,\ell}$, where \begin{equation}
    \varphi_{2,\ell}=\frac{\eta\sigma_{\ell}^2}{1-\lambda} \left(\lambda+\frac{2\alpha}{1-\alpha} \right)  \frac{ d_{2,\ell}^{-\zeta}|g_{2,\ell}|^2}{\sigma_D^2}.
\end{equation}
Hence, $\gamma_{2,\ell}$ is the product of two exponential random variables with scale parameters $\theta_\ell$ and $\nu_\ell$ where 
\begin{equation}
     {\nu_\ell}  = \frac{1-\lambda}{\eta\sigma_{\ell}^2} \left(\lambda+\frac{2\alpha}{1-\alpha} \right)^{-1}  \frac{\sigma_D^2}{ d_{2,\ell}^{-\zeta}}.
\end{equation}
Thus, we have the following lemma giving the distribution of $\gamma_{e2e,\ell}$ :
\begin{lemma} \label{cdf_e2e_thm}
The CDF of the RV $\gamma_{e2e,\ell} = \min\{\gamma_{1,\ell},\gamma_{1,\ell} \times \varphi_{2,\ell}\}$ where $\gamma_{1,\ell} \sim \text{Exp}(\theta_\ell)$ and $\varphi_{2,\ell} \sim \text{Exp}(\nu_\ell)$ is given by 
\begin{equation}
      F_{\gamma_{e2e,\ell}}(\gamma)  =1-\theta_\ell \sum\limits_{k=0}^ \infty \frac{(-\nu_\ell)^k}{k!} \gamma E_k(\theta_\ell \gamma);  \ \gamma \geq 0
    \label{cdf_e2e1}
\end{equation} where $E_k(.)$ is the exponential integral function.
\end{lemma}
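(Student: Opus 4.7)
The plan is to compute the survival function $\bar F_{\gamma_{e2e,\ell}}(\gamma)=\mathbb{P}(\gamma_{e2e,\ell}>\gamma)$ rather than the CDF directly, because the factorisation $\gamma_{e2e,\ell}=\min\{\gamma_{1,\ell},\gamma_{1,\ell}\varphi_{2,\ell}\}$ lines up more cleanly with an upper tail. The crucial structural observation is that $\gamma_{1,\ell}$ depends only on $|g_{1,\ell}|^2$ while $\varphi_{2,\ell}$ depends only on $|g_{2,\ell}|^2$, and these two Rayleigh gains are independent by the system model assumptions. Hence $\gamma_{1,\ell}\sim\text{Exp}(\theta_\ell)$ and $\varphi_{2,\ell}\sim\text{Exp}(\nu_\ell)$ are independent, even though $\gamma_{1,\ell}$ and $\gamma_{2,\ell}$ are correlated through the common $|g_{1,\ell}|^2$.

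I would then condition on $\gamma_{1,\ell}=x$. For $x>\gamma$, the event $\{\gamma_{1,\ell}\varphi_{2,\ell}>\gamma\}$ becomes $\{\varphi_{2,\ell}>\gamma/x\}$, which by the exponential tail has probability $e^{-\nu_\ell\gamma/x}$. Hence
\begin{equation}
\bar F_{\gamma_{e2e,\ell}}(\gamma)=\int_{\gamma}^{\infty}\theta_\ell e^{-\theta_\ell x}\,e^{-\nu_\ell\gamma/x}\,dx .
\end{equation}
For $x\le\gamma$ the inner probability is zero, so the integration naturally starts at $\gamma$. This single integral compactly encodes the minimum operation.

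The next step is to introduce the exponential integral by expanding the troublesome factor $e^{-\nu_\ell\gamma/x}$ in its Maclaurin series in the variable $\nu_\ell\gamma/x$, giving $\sum_{k\ge 0}(-\nu_\ell\gamma)^k/(k!x^k)$. Swapping sum and integral (justified by uniform absolute convergence on $x\ge\gamma>0$, since $|{-}\nu_\ell\gamma/x|\le\nu_\ell$ there, so dominated convergence applies term by term), I am left with $\int_\gamma^\infty \theta_\ell e^{-\theta_\ell x}x^{-k}\,dx$ for each $k$. Substituting $t=x/\gamma$ turns this into $\theta_\ell\gamma^{1-k}\int_1^\infty e^{-\theta_\ell\gamma t}t^{-k}\,dt=\theta_\ell\gamma^{1-k}E_k(\theta_\ell\gamma)$ by the definition of the exponential integral. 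Collecting the $\gamma^k\cdot\gamma^{1-k}=\gamma$ factors gives
\begin{equation}
\bar F_{\gamma_{e2e,\ell}}(\gamma)=\theta_\ell\sum_{k=0}^{\infty}\frac{(-\nu_\ell)^k}{k!}\gamma\,E_k(\theta_\ell\gamma),
\end{equation}
and taking $1-\bar F_{\gamma_{e2e,\ell}}(\gamma)$ yields (\ref{cdf_e2e1}).

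The only real obstacle is verifying the term-by-term integration. Here the dominating function $\theta_\ell e^{-\theta_\ell x}e^{\nu_\ell}$ on $[\gamma,\infty)$ is integrable, which suffices for Fubini/Tonelli after splitting the sum into its even and odd parts (to handle the alternating signs). A secondary, mostly notational point is to keep the $k=0$ term well defined: $E_0(z)=e^{-z}/z$ so $\gamma E_0(\theta_\ell\gamma)=e^{-\theta_\ell\gamma}/\theta_\ell$, and the leading term of $\bar F_{\gamma_{e2e,\ell}}$ correctly equals $e^{-\theta_\ell\gamma}$, matching the expected limit as $\nu_\ell\to 0$ (i.e.\ $\varphi_{2,\ell}$ degenerate at infinity, so $\gamma_{e2e,\ell}=\gamma_{1,\ell}$). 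This is a clean sanity check I would include at the end.
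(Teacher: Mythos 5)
Your proposal is correct and follows essentially the same route as the paper's Appendix A: both reduce the problem (you via the survival function $\mathbb{P}(\gamma_{1,\ell}>\gamma,\ \gamma_{1,\ell}\varphi_{2,\ell}>\gamma)$, the paper via the CDF of $\gamma_{1,\ell}\min(1,\varphi_{2,\ell})$) to the identical integral $\int_{\gamma}^{\infty}\theta_\ell e^{-\theta_\ell x-\nu_\ell\gamma/x}\,dx$, then perform the same Maclaurin expansion, Fubini swap, and substitution $t=x/\gamma$ to obtain $E_k(\theta_\ell\gamma)$. Your explicit domination bound for the interchange and the $k=0$/$\nu_\ell\to 0$ sanity check are welcome additions that the paper omits.
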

\begin{proof}
Please refer Appendix A for the proof. 
\end{proof}
Using the above CDF, we derive the distribution of $\Tilde{\gamma}_{e2e,max}$ and the results are presented in the following theorem :

\begin{theorem} \label{asymp_thm}
The distribution of $\tilde{\gamma}_{e2e,max}$ is given by 
\begin{align}
     & F_{\Tilde{\gamma}_{e2e,max}}(\gamma)  =\exp(-\exp(-\gamma)),
    \label{uz_final}
\end{align} for the choice of normalising constants $a_L={\beta}^{-1}$ and $b_L={\beta}^{-1} \log\left(\sum\limits_{\ell=1}^L \mathbb{I}_{\theta_\ell=\beta} \exp(-\nu_\ell) \right)$. Here, $\beta=\theta_{\hat{\ell}} \in \{\theta_1,\cdots, \theta_L\}$ is the smallest possible value of $\beta$ such that 
\begin{equation}
\lim\limits_{L \to \infty} \sum\limits_{\ell=1}^L \mathbb{I}_{\theta_\ell=\beta} \to \infty \ \text{as} \ L \to \infty,
\quad \text{and} \quad  \lim\limits_{L \to \infty} R_\ell \left(R_{\hat{\ell}}\right)^{\frac{-\theta_\ell}{\beta}} = 0 \ \forall \  \theta_\ell > \beta
\end{equation}
are satisfied. Here, $R_\ell = \sum\limits_{i=1}^L \mathbb{I}_{\theta_i=\theta_\ell}$ and $\mathbb{I}_{A}$ represents the indicator function of the event $A$.
\end{theorem}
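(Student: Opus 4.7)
The plan is to apply Theorem \ref{thm_barakat1} to the sequence $\{F_{\gamma_{e2e,\ell}}\}_{\ell=1}^{L}$ obtained from Lemma \ref{cdf_e2e_thm}. Concretely I would (a) extract a clean asymptotic form of the tail $1-F_{\gamma_{e2e,\ell}}(\gamma)$ for large $\gamma$, (b) substitute the proposed $a_L$ and $b_L$ into this tail, (c) evaluate the sum in \eqref{cdf_form1} to identify $\tilde u(z)$, and (d) verify the uniformity assumptions $\mathcal{C}_1$, $\mathcal{C}_2$ together with the concavity condition of Theorem \ref{thm_barakat1}.

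The first step is the tail expansion. Using the large-argument behaviour $E_k(x)\sim e^{-x}/x$ of the exponential integral for each fixed $k\geq 0$, the series in Lemma \ref{cdf_e2e_thm} should collapse to
\begin{equation}
1-F_{\gamma_{e2e,\ell}}(\gamma) \;=\; \theta_\ell\gamma\sum_{k=0}^{\infty}\frac{(-\nu_\ell)^k}{k!}E_k(\theta_\ell\gamma) \;\sim\; e^{-\theta_\ell\gamma}\sum_{k=0}^{\infty}\frac{(-\nu_\ell)^k}{k!} \;=\; e^{-\theta_\ell\gamma-\nu_\ell},
\end{equation}
as $\gamma\to\infty$. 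The interchange of limit and summation would be justified via dominated convergence using the integral representation $E_k(x)=\int_1^{\infty}t^{-k}e^{-xt}\,dt$ and a geometric majorant in $k$.

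The second step is to plug $\gamma=a_Lz+b_L$ with $a_L=\beta^{-1}$ and $b_L=\beta^{-1}\log S_\beta$, where $S_\beta:=\sum_{\ell=1}^{L}\mathbb{I}_{\theta_\ell=\beta}e^{-\nu_\ell}$, into this tail. This yields
\begin{equation}
1-F_\ell(a_Lz+b_L) \;\sim\; e^{-\nu_\ell}\exp\!\left(-\tfrac{\theta_\ell}{\beta}z\right)S_\beta^{-\theta_\ell/\beta}.
\end{equation}
Since $\beta=\min_\ell\theta_\ell$, I would split $\sum_{\ell=1}^{L}(1-F_\ell(a_Lz+b_L))$ into the indices with $\theta_\ell=\beta$ and those with $\theta_\ell>\beta$. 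The $\theta_\ell=\beta$ contributions telescope:
\begin{equation}
S_\beta^{-1}e^{-z}\sum_{\ell=1}^{L}\mathbb{I}_{\theta_\ell=\beta}\,e^{-\nu_\ell} \;=\; e^{-z}.
\end{equation}
For indices with $\theta_\ell>\beta$, I would group them by the common value of $\theta_\ell$ and bound each group's contribution by a constant multiple of $R_\ell\,S_\beta^{-\theta_\ell/\beta}$; because $S_\beta\geq R_{\hat\ell}\min_\ell e^{-\nu_\ell}$, the stated hypothesis $\lim_{L\to\infty}R_\ell(R_{\hat\ell})^{-\theta_\ell/\beta}=0$ for every $\theta_\ell>\beta$ forces each of these grouped contributions, and hence their finite (or suitably controlled) sum, to zero. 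This produces $\tilde u(z)=e^{-z}$ and therefore $\tilde H_{max}(z)=\exp(-e^{-z})$, which is the Gumbel law claimed in \eqref{uz_final}.

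Finally I would verify the remaining ingredients. Condition \ref{ua1} follows because each summand in the previous display is bounded by a factor $S_\beta^{-\theta_\ell/\beta}\to 0$ that is uniform over $\ell$; condition \ref{ua2} follows by running the same argument on the sub-sum over $\ell\leq m_L$ when $m_L/L\to t$, producing a limit proportional to $e^{-z}$ and hence finite; and case $(i)$ of Theorem \ref{thm_barakat1} is immediate since $\log\tilde H_{max}(z)=-e^{-z}$ has second derivative $-e^{-z}<0$. The hard part of the argument is the tail expansion and, especially, controlling the accumulated error that the symbol "$\sim$" hides once summed over $L\to\infty$ non-identical relays; the alternating structure of the series in Lemma \ref{cdf_e2e_thm} together with the integral representation of $E_k$ gives a uniform remainder of order $\gamma^{-1}e^{-\theta_\ell\gamma}$, and the remaining book-keeping is what invokes the two conditions on $\beta$ stated in the theorem.
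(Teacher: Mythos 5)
Your proposal follows essentially the same route as the paper: expand the tail of $F_{\gamma_{e2e,\ell}}$ via the large-argument asymptotics of $E_k$ to get $1-F_\ell(\gamma)\sim e^{-\theta_\ell\gamma-\nu_\ell}$, substitute the stated $a_L,b_L$, group the sum in \eqref{cdf_form1} by the distinct values of $\theta_\ell$, let the $\theta_\ell=\beta$ block telescope to $e^{-z}$, and kill the remaining blocks using the two conditions on $\beta$; the log-concavity check is the same. If anything you are more careful than the paper about the interchange of limits and the accumulated $O\left(\gamma^{-1}\right)$ remainders, which the paper simply discards.

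One concrete slip: you write ``since $\beta=\min_\ell\theta_\ell$'' and split the sum only into $\theta_\ell=\beta$ and $\theta_\ell>\beta$. In the theorem, $\beta$ is the \emph{smallest $\theta$-value satisfying the two growth conditions}, not necessarily the overall minimum, so indices with $\theta_\ell<\beta$ can occur (their multiplicities $R_\ell$ just stay bounded, by minimality of $\beta$ among values satisfying \eqref{beta_1}). The paper handles this third block explicitly: for $\theta_\ell<\beta$ the contribution is at most $R_\ell\, S_\beta^{-\theta_\ell/\beta}e^{-\theta_\ell z/\beta}$ with $R_\ell$ bounded and $S_\beta\to\infty$, hence it also vanishes. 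Adding that one line closes the gap and makes your argument match the paper's.
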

\begin{proof}

Recall from Theorem \ref{thm_barakat1} that if we can identify normalising constants $a_L\geq 0$ and $b_L$ satisfying $\mathcal{C}_1$ and $\mathcal{C}_2$ such that
\begin{equation}
    \Tilde{u}(\gamma)=\lim\limits_{L \to \infty} \sum\limits_{\ell=1}^L \left(1-F_\ell\left(a_L \gamma +b_L \right) \right) < \infty,
    \label{uz_appendix}
\end{equation}
then we can identify the form of the asymptotic distribution of $\Tilde{\gamma}_{e2e,max}^L$. Mejzlers theorem \cite[Chapter 5]{de2007extreme} gives specific conditions on the normalising constants $a_L$ and $b_L$ such that the UA's (\ref{ua1}) and (\ref{ua2}) are satisfied. Using these results, we assume that there exist sequences $a_L$ and $b_L$ such that
\begin{align}
\mid \log a_L \mid + \mid b_L \mid \ \to \infty \ \text{as} \ L \to \infty
\label{norm_contn1}
\end{align}
and
\begin{equation}
\begin{array}{l}
\frac{a_{L+1}}{a_{L}} \rightarrow 1, \\
\frac{\left(b_{L+1}-b_{L}\right)}{a_{L}} \rightarrow 0
\end{array}
\label{norm_contn2}
\end{equation}are satisfied. Note that the above conditions ensure that the normalising constants grows as a smooth function of $L$. They also guarantee that the normalising constants are large enough so that $\tilde{\gamma}_{e2e,max}^L$ has a non degenerate distribution as $L \to \infty$. Next, for this choice of $a_L$ and $b_L$, we evaluate $\Tilde{u}(\gamma)$. Hence we have,
\begin{align}
    \Tilde{u}(\gamma) =\lim\limits_{L \to \infty}  \sum\limits_{\ell=1}^L  \theta_\ell \sum\limits_{k=0}^\infty \frac{(-\nu_\ell)^k}{k!} (a_L \gamma + b_L) \ E_{k}(\theta_\ell(a_L\gamma+ b_L)).
    \label{ugamma_expand}
\end{align}
Since $a_L$ and $b_L$ satisfy the conditions in (\ref{norm_contn1}) and (\ref{norm_contn2}), the argument of the exponential integral function increases to infinity as $L \to \infty$. Hence, we make use of the following asymptotic expansion of the exponential integral function to expand  (\ref{ugamma_expand}):
\begin{equation}
    E_n(x) = \frac{\exp(-x)}{x}\left \lbrace 1-\frac{n}{x} +\frac{n(n+1)}{x^2} - \cdots \right  \rbrace.
    \label{exp_int_n}
\end{equation}

Using the above expansion, we rewrite $\Tilde{u}(\gamma)$ as 
\begin{align}
 \Tilde{u}(\gamma) &= \lim\limits_{L \to \infty} \sum\limits_{\ell=1}^L \   \sum\limits_{k=0}^\infty \frac{(-\nu_\ell)^k}{k!} \exp(-\theta_\ell (a_L \gamma+b_L)) \ \left \lbrace 1-\frac{k}{\theta_\ell (a_L\gamma+b_L)}  + \frac{k(k+1)}{\left(\theta_\ell (a_L \gamma+b_L) \right)^2} -\cdots \right \rbrace, 
     \label{uz_expand1}\\ 
     & = \lim\limits_{L \to \infty} \sum\limits_{\ell=1}^L \   \sum\limits_{k=0}^\infty \frac{(-\nu_\ell)^k}{k!} \exp(-\theta_\ell (a_L \gamma+b_L)) \ \left \lbrace 1+ \mathcal{O} \left( \frac{1}{a_L \gamma + b_L}\right)  \right \rbrace.
\end{align}
We know that $a_L \gamma + b_L \to \infty$ as $L \to \infty$ and hence we approximate $\tilde{u}(\gamma)$ by approximating all terms of the form $\left( \frac{1}{a_L\gamma+b_L}\right)^k$ for all $k>0$ to be zero.
Thus we have,
\begin{align}
\Tilde{u}(\gamma) & = \lim\limits_{L  \to \infty} \sum\limits_{\ell=1}^L \exp(-\theta_\ell (a_L \gamma+b_L)) \sum\limits_{k=0}^\infty \frac{(-\nu_\ell)^k}{k!} \\ &= \lim\limits_{L  \to \infty} \sum\limits_{\ell=1}^L \exp(-\theta_\ell (a_L \gamma+b_L)) \exp(-\nu_\ell).
\label{u_gamma_final_limit}
\end{align}
Thus, the asymptotic distribution of the RV $\Tilde{\gamma}_{e2e,max}^L$ can be determined by identifying $a_L$ and $b_L$ satisfying (\ref{norm_contn1}) and (\ref{norm_contn2}) such that the limit in (\ref{u_gamma_final_limit}) evaluates to a function of $\gamma$ taking finite values. Hence in the following paragraphs, we identify possible choices of $a_L$ and $b_L$ for various special cases which are of practical interest.

\subsubsection{All links are i.i.d.} \label{iid}
We begin with the most simple case where $\{\theta_\ell\}$ and $\{\nu_\ell\}$ are identical. Hence, we have $\theta_\ell=\theta$ and $\nu_\ell=\nu$ $\forall \ \ell \in \{1,\cdots,L\}$. In this case, the limit in (\ref{u_gamma_final_limit}) is given by
\begin{equation}
\Tilde{u}(\gamma) = \lim\limits_{L \to \infty} L \exp(-\theta (a_L \gamma+b_L)) \exp(-\nu).
\end{equation}
Here, if we choose $a_L=\frac{1}{\theta}$ and $b_L=\frac{\log(L)-\nu}{\theta}$, we have
\begin{equation}
\Tilde{u}(\gamma) = \lim\limits_{L \to \infty} L \exp(- \gamma) \exp(-\log(L)+\nu) \exp(-\nu) = \exp(- \gamma).
\end{equation}

{Note that this choice of $a_L$ and $b_L$ satisfy the conditions in (\ref{norm_contn1}) and (\ref{norm_contn2}).} Thus, the distribution of the RV $\Tilde{\gamma}_{e2e,max}$ in this case is given by 
\begin{equation}
F_{\Tilde{\gamma}_{e2e,max}}(\gamma)=\exp(-\exp(- \gamma)),
\end{equation} when $a_L=\frac{1}{\theta}$ and $b_L = \frac{\log(L)-\nu}{\theta}$.
Fig (\ref{norm_iid_L_60}) shows one simulation using this result.
\begin{figure}[t]
    \centering
    \includegraphics[scale=0.5]{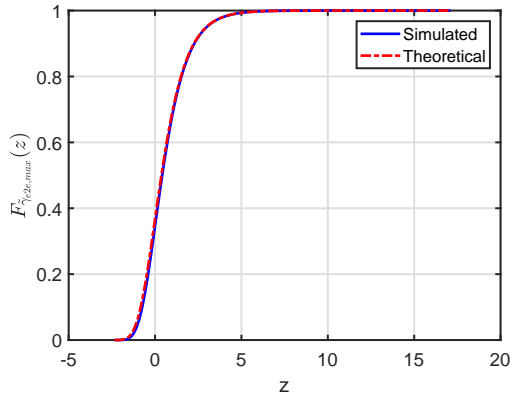}
    \caption{CDF of $\tilde{\gamma}_{e2e,amx}$ for $L=60$, $R_{\theta}=2$ and $R_{\hat{\ell}}=20$.}
    \label{norm_iid_L_60}
\end{figure}
\subsubsection{All S to $R_\ell$ links are i.i.d.} \label{nu_inid}
According to our system model, such a scenario can arise in cases where all relays are deployed at equal distances from the source node. One such model would be the system where all the relays are placed along the periphery of a circle of radius $d_{1,\ell}=d_1 \forall \ \ell$ with the source $S$ at the centre of the circle. In this case we have $\theta_\ell=\theta$ $\forall \ell$. Following the initial steps similar to the previous case, we have
\begin{align}
\Tilde{u}(\gamma) & = \lim\limits_{L \to \infty} \sum\limits_{\ell=1}^L \exp(-\theta (a_L \gamma+b_L)) \sum\limits_{k=0}^\infty \frac{(-\nu_\ell)^k}{k!} \\
& = \lim\limits_{L \to \infty} \exp(-\theta(a_L \gamma+b_L)) \sum\limits_{\ell=1}^L \exp(-\nu_\ell).
\end{align}
Now, if we choose $a_L=\frac{1}{\theta}$ and $b_L=\frac{\log \left({\sum_{\ell=1}^L \exp(-\nu_\ell)}\right)}{\theta}$, {which satisfy the conditions in (\ref{norm_contn1}) and (\ref{norm_contn2}),} we have
\begin{equation}
\Tilde{u}(\gamma) = \exp(- \gamma). 
\end{equation}

Again, the distribution of the RV $\Tilde{\gamma}_{e2e,max}$ in this case is given by 
\begin{equation}
F_{\Tilde{\gamma}_{e2e,max}}(\gamma)=\exp(-\exp(- \gamma)),
\end{equation} 
which is log concave.
Fig (\ref{norm_iid_nu_L_60}) shows one simulation using this result.
\begin{figure}[t]
    \centering
    \includegraphics[scale=0.5]{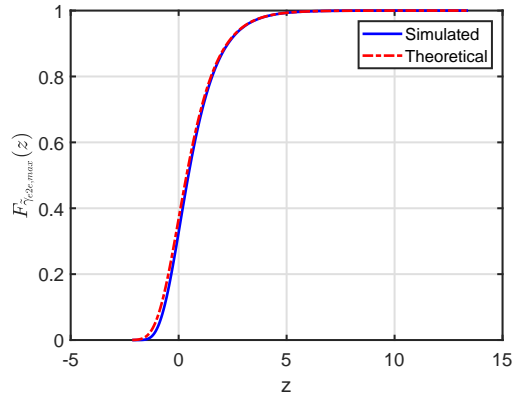}
    \caption{CDF of $\tilde{\gamma}_{e2e,amx}$ for $L=60$, $R_{\theta}=2$ and $R_{\hat{\ell}}=20$.}
    \label{norm_iid_nu_L_60}
\end{figure}
\subsubsection{All $R_\ell$ to D links are i.i.d.} \label{theta_inid}
In this case, we have $\nu_\ell=\nu$ $\forall \ \ell$ and hence we have
\begin{equation}
\Tilde{u}(\gamma) = \lim\limits_{L \to \infty} \exp(-\nu) \sum\limits_{\ell=1}^L \exp(-\theta_\ell(a_L \gamma + b_L)).
\end{equation}
Now, in any practical system we can assume that the set of possible values of $\theta_\ell$ represented by $\Theta$ has finite cardinality $R_{\Theta}$. For example according to the system model considered in Section II, $\theta_\ell$ are i.n.i.d. because of the difference in distances between the source and the $\ell$-th relay. Now, we can uniformly quantize the distance between the relay closest and farthest from $S$ such that $d_{1,\ell}^{-\zeta}$ and hence $\theta_\ell$ are not very different for relays falling into the same quantization bin. Recall that $\zeta$ is the path loss coefficient and $\theta_\ell \propto d_{1,\ell}^\zeta$. Let $R_\ell$ be defined as follows:
\begin{equation}
 R_\ell = \sum\limits_{i=1}^L \mathbb{I}_{\theta_i=\theta_\ell},
\end{equation}
where $\mathbb{I}_{X}$ is the indicator function of the event $X$. Thus, $R_\ell$ gives the number of occurences of the value $\theta_\ell$ among $\{\theta_i\}_{i=1}^L$. Note that  in any practical system $R_\ell$ will be a non decreasing function of $L$ and the maximum rate at which $R_\ell$ can grow will be $\mathcal{O}(L)$. Next, we choose $\theta_{\hat{\ell}}=\beta$ to be the smallest element in ${\Theta}$ that satisfies the following two conditions:
\begin{equation}
\lim\limits_{L \to \infty} \sum\limits_{\ell=1}^L \mathbb{I}_{\theta_\ell=\beta} \to \infty \ \text{as} \ L \to \infty,
\label{beta_1}
\end{equation}
and
\begin{equation}
 \lim\limits_{L \to \infty} R_\ell \left(R_{\hat{\ell}}\right)^{\frac{-\theta_\ell}{\beta}} = 0 \ \forall \theta_\ell > \beta.
\label{beta2}
\end{equation}
Now, when $L \to \infty$ and with finite choices of $\theta_\ell$, we will have at least one $\beta$ that satisfies the above two conditions in any practical system \footnote{Note that if any one of the following conditions is true, we need to consider only  (\ref{beta_1}) for choosing $\beta$: (i) if $R_{\hat{\ell}}=\mathcal{O}(L)$ when $\theta_{\hat{\ell}}$ is the smallest $\theta_\ell$ satisfying (\ref{beta_1}) or (ii) if $R_{\hat{\ell}}=\mathcal{O}(f(L))$ for all $\ell$; i.e $R_\ell$ corresponding to all $\theta_\ell$ satisfying (\ref{beta_1}) has the same rate of growth as $M$ increases. }. Next we choose $a_L=\frac{1}{\beta}$  $b_L=\frac{\log(R_{\hat{\ell}})-\nu}{\beta}$. {Furthermore, note that this choice of $a_L$ and $b_L$ satisfy the conditions in (\ref{norm_contn1}) and (\ref{norm_contn2}).} Thus we have
\begin{align}
\Tilde{u}(\gamma) & = \exp(-\nu) \lim\limits_{L \to \infty} \sum\limits_{\ell=1}^L \exp(-\theta_\ell a_L \gamma) \exp\left(-\theta_\ell \frac{\log(R_{\hat{\ell}})}{\beta} \right) \exp\left(\theta_\ell \frac{\nu}{\beta} \right)\\
&= \exp(-\nu) \lim\limits_{L \to \infty} \underbrace{\sum\limits_{\ell=1}^L \exp\left(-\theta_\ell \frac{\gamma}{\beta} \right) R_{\hat{\ell}}^{\frac{-\theta_\ell}{\beta}}\exp\left(\theta_\ell \frac{\nu}{\beta} \right)}_{Term 1}. 
\label{uz_theta_inid}
\end{align}
Furthermore, we rewrite  (\ref{uz_theta_inid}) as follows:
\begin{equation}
\Tilde{u}(\gamma) = \exp(-\nu) \lim\limits_{L \to \infty} \left \lbrace \sum\limits_{i=1}^{R_{\Theta}} R_i \exp\left(-\theta_i \frac{\gamma}{\beta}\right) R_{\hat{\ell}}^{\frac{{-\theta}_i}{\beta}}\exp\left(\theta_i \frac{\nu}{\beta} \right) \right \rbrace  .
\end{equation}
Now, we have  $\lim\limits_{L \to \infty} R_i R_{\hat{\ell}}^{\frac{-\theta_i}{\beta}} =0$ for all $i > \hat{\ell}$. Similarly, for all $i< \hat{\ell}$, $R_i$ is smaller than $R_{\hat{\ell}}$ so that $\lim\limits_{L \to \infty} R_i R_{\hat{\ell}}^{\frac{-\theta_i}{\beta}} =0$. Thus,
\begin{equation}
\Tilde{u}(\gamma) =  \exp(- \gamma).
\end{equation} Thus, the distribution of $\Tilde{\gamma}_{e2e,max}$ in this case is given by, 
\begin{equation}
F_{\Tilde{\gamma}_{e2e,max}}(\gamma)=\exp(-\exp(- \gamma)),
\end{equation} when $a_L=\frac{1}{\beta}$ and $b_L=\frac{\log(R_{\hat{\ell}})-\nu}{\beta}$.
Fig (\ref{norm_iid_theta_L_60}) shows one simulation using this result.
\begin{figure}[t]
    \centering
    \includegraphics[scale=0.5]{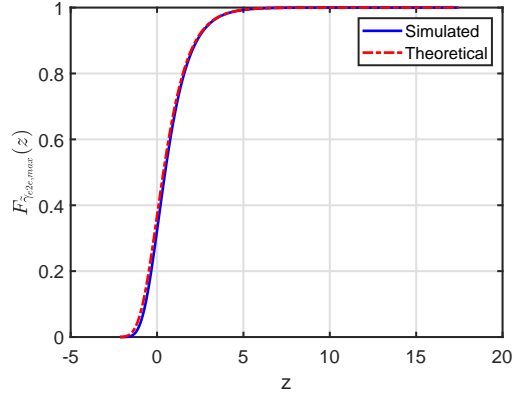}
    \caption{CDF of $\tilde{\gamma}_{e2e,amx}$ for $L=60$.}
    \label{norm_iid_theta_L_60}
\end{figure}
\subsubsection{All links are i.n.i.d.} \label{inid}
In this case, we have 
\begin{equation}
\Tilde{u}(\gamma) = \lim\limits_{L \to \infty} \sum\limits_{\ell=1}^L \exp(-\theta_\ell a_L \gamma)\exp(-\theta_\ell b_L) \exp(-\nu_\ell).
\end{equation}
As in the previous case we assume that there are only $R_{\Theta}$ possible values for $\theta_\ell$ and there exists $\beta$ such that (\ref{beta_1}) and (\ref{beta2}) are satisfied. Choosing $a_L=\frac{1}{\beta}$, we have 
\begin{equation}
\Tilde{u}(\gamma) = \lim\limits_{L \to \infty} \sum\limits_{\ell=1}^L \exp\left(-\frac{-\theta_\ell \gamma}{\beta}\right) \exp(-\nu_\ell) \exp(-\theta_\ell b_L)
\end{equation}
Rewriting the sum term in the previous expressions in terms of the distinct $\theta_\ell$'s 
\begin{equation}
\Tilde{u}(\gamma) = \lim\limits_{L \to \infty} \sum\limits_{i=1}^{R_{\Theta}} \exp\left(\frac{\theta_i \gamma}{\beta} \right) \exp(-\theta_i b_L) \sum\limits_{k:\theta_k = \theta_i}\exp(-\nu_k).
\end{equation}
Next, we choose $b_L$ as follows:
\begin{equation}
b_L = \frac{1}{\beta} \log\left(\sum\limits_{j:\theta_j=\beta} \exp(-\nu_j) \right).
\end{equation}
{Again, we can see that this choice of $a_L$ and $b_L$ satisfy the conditions in (\ref{norm_contn1}) and (\ref{norm_contn2}).} Using the above choice of $b_L$, $\Tilde{u}(\gamma)$ can be rewritten as 
\begin{equation}
\Tilde{u}(\gamma) = \lim\limits_{L \to \infty} \sum\limits_{i=1}^{R_{\Theta}} \underbrace{\exp\left(\frac{-\theta_i \gamma}{\beta} \right) \left(\sum\limits_{j:\theta_j=\beta} \exp(-\nu_j)\right)^{-\frac{\theta_i}{\beta}} \sum\limits_{k:\theta_k=\theta_i} \exp(-\nu_k)}_{\text{Term 2}}
\end{equation}
Next, let us analyse Term 2 for different values of $\theta_i$. For $\theta_i=\beta$, term2 will reduce to $\exp(-\gamma)$. Now for $\theta_i \neq \beta$, we have
\begin{equation}
\exp\left(\frac{-\theta_i\gamma}{\beta}\right)  \frac{\sum\limits_{k:\theta_k=\theta_i} \exp(-\nu_k)}{\left(\sum\limits_{j:\theta_j=\beta} \exp(-\nu_j)\right)^{-\frac{\theta_i}{\beta}}} \leq \exp\left(\frac{-\theta_i \gamma}{\beta} \right) \underbrace{\frac{R_i \exp(-\nu_{min})}{(R_{\hat{\ell}}\exp(-\nu_{min}))^{-\theta_i/\beta}}}_{\text{Term 3}}
\end{equation}
We know that $\lim\limits_{L \to \infty} R_{\hat{\ell}}=\infty$. Now, whenever $\theta_i > \beta$, we can observe that the denominator of Term 3 will grow to infinity faster than the numerator and hence Term 3 will converge to zero.
Thus, we conclude that when $\theta_i \neq \beta$ and $\theta_i > \beta$, Term 2 will be zero. Now if $\theta_i < \beta$, then $R_i < R_{\hat{\ell}}$ and we know that $\lim\limits_{L \to \infty} R_i< \infty$. Hence, the denominator of Term 3 will be zero even for this case as $L \to \infty$. Thus we can conclude that 
\begin{equation}
\Tilde{u}(\gamma) =  \exp(- \gamma).
\label{uz_final_inid}
\end{equation}
The asymptotic distribution of the normalised maximum e2e SNR is then given by $F_{\Tilde{\gamma}_{e2e,max}}(\gamma)=\exp(-\exp(-\gamma))$ when $a_L=\frac{1}{\beta}$ and $b_L=\frac{1}{\beta} \log\left(\sum\limits_{j:\theta_j=\beta} \exp(-\nu_j) \right)$.
Fig (\ref{norm_inid_all_L_60}) shows one simulation using this result.
\begin{figure}[t]
    \centering
    \includegraphics[scale=0.5]{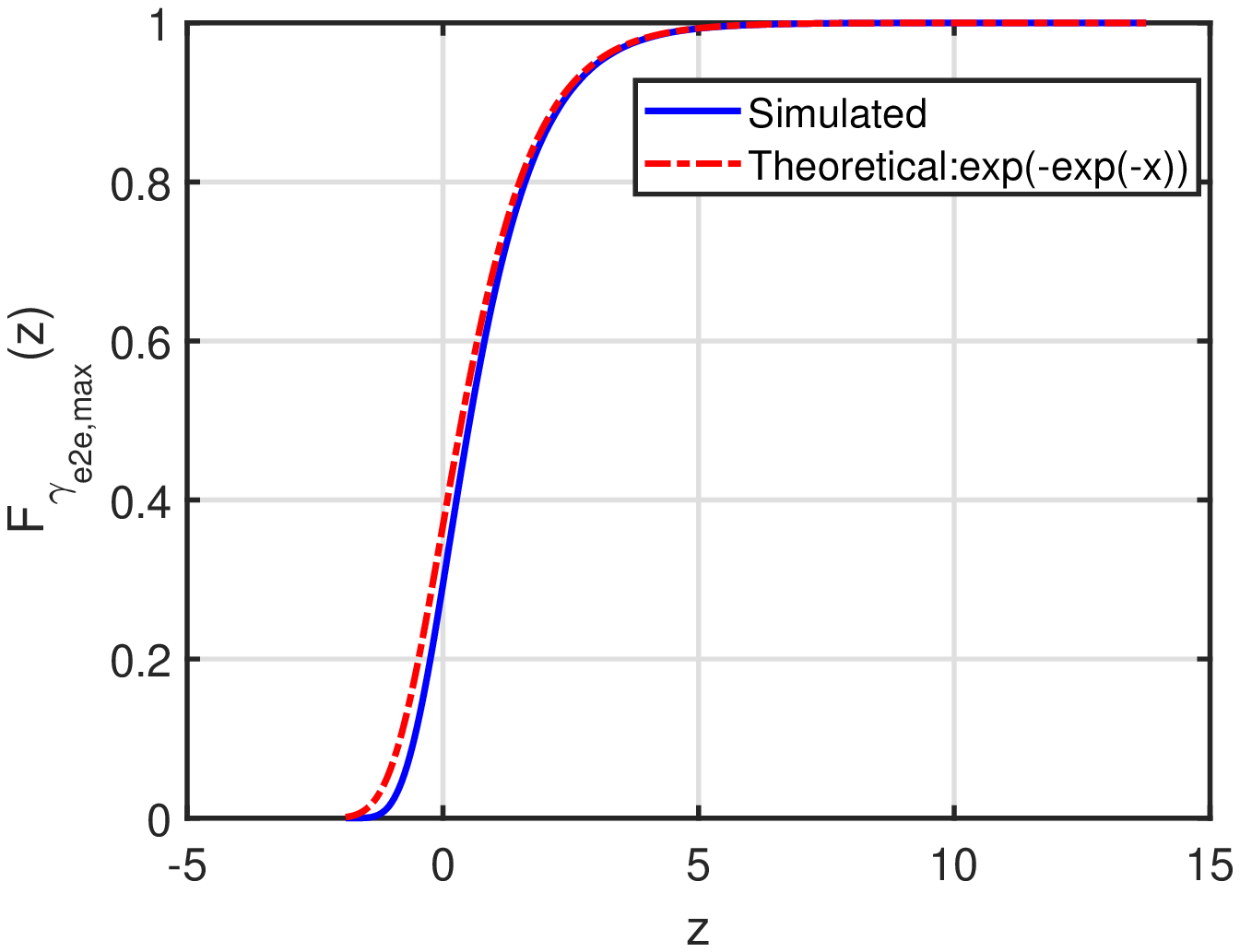}
    \caption{CDF of $\tilde{\gamma}_{e2e,amx}$ for $L=60$, $R_{\theta}=2$ and $R_{\hat{\ell}}=20$.}
    \label{norm_inid_all_L_60}
\end{figure}
\end{proof}

{Note that we need the statistics of the unnormalised RV $\gamma_{e2e,max}$ (which is the maximum e2e SNR) for all practical analysis and this CDF can be evaluated as $F_{\gamma_{e2e,max}} = \exp\left(-\exp\left(-\frac{(\gamma-b_L)}{a_L} \right) \right)$. This is same as evaluating $\exp\left(-\Tilde{u}\left(\frac{\gamma-b_L}{a_L} \right) \right)$. Thus, this CDF can also be computed as $F_{\gamma_{e2e,max}} = \exp(-u(\gamma))$ where
\begin{equation}
u(\gamma) = \sum\limits_{\ell=1}^L \exp(-\theta_\ell \gamma -\nu_\ell),
\label{uz_new}
\end{equation}
for moderate values of $L$. Note that the distribution proposed in all the cases discussed above are far easier to evaluate than the exact distribution of the maximum RV given by }
    \begin{equation}
        F_{\gamma_{e2e,max}^L}(\gamma) = \prod \limits_{\ell=1}^L 1-\theta_\ell \sum\limits_{k=0}^ \infty \frac{(-\nu_\ell)^k}{k!} \gamma E_k(\theta_\ell \gamma).
       \label{exact_max}
    \end{equation}

{Here, each of the product term involves an infinite summation and hence (\ref{exact_max}) will have a 
complicated expression which is difficult to evaluate whenever $L>2$. Therefore, a simplified expression for the CDF of ${\gamma_{e2e,max}}= \lim\limits_{L \to \infty} \gamma_{e2e,max}^L$ will prove to be propitious for analysing system performance and resource planning. While (\ref{uz_new}) corresponds to the more general case, for all the special cases mentioned in Section \ref{iid} - Section \ref{inid} we can use the Gumbel CDF for all the analysis. Furthermore from an engineering point of view, one can always plan where to place the relay and hence we expect to encounter the scenario in Section \ref{nu_inid} more commonly.}

\subsection{Asymptotic ergodic and outage capacity} \label{erg_cap_out_cap}

Given that we have characterised the distribution of the asymptotic e2e SNR, we proceed to derive the asymptotic ergodic capacity defined as 
\begin{equation}
    C_{e2e,max} =\frac{1}{2}  \lim\limits_{L \to \infty}  \mathbb{E}\left[\log_2\left(1+\gamma_{e2e,max}^L \right) \right].
    \label{asym_erg_temp}
\end{equation} Here, the factor $\frac{1}{2}$ accounts for the transmission of information happening only over half of the time slot \cite{gu2015rf}. Since we have already proved that $\gamma_{e2e,max}^L \underrightarrow{D} \gamma_{e2e,max}$, the asymptotic ergodic capacity can be evaluated as 
\begin{equation}
    C_{e2e,max} = \frac{1}{2} \mathbb{E}\left[\log_2\left(1+\gamma_{e2e,max} \right) \right].
    \label{asym_erg}
\end{equation}
This result can be easily derived by applying continuous mapping theorem and monotone convergence theorem to the expression in (\ref{asym_erg_temp}). The steps are very similar to the proof in Appendix B of \cite{subhash2020transmit} and hence we do not repeat them here. Now, the expression in (\ref{asym_erg}) can be evaluated via numerical integration of the following expression :
\begin{equation}
    C_{e2e,max} = \frac{1}{2} \times \int\limits_0^\infty \log_2\left(1+\gamma \right) \ f_{\gamma_{e2e,max}}(\gamma) \ d\gamma.
     \label{asym_erg_intg}
\end{equation}Here, $f_{\gamma_{e2e,max}}(\gamma)$ is the pdf of ${\gamma_{e2e,max}}$ and is given below :
\begin{equation}
    f_{\gamma_{e2e,max}}(\gamma) = \exp \left(-\sum\limits_{\ell=1}^Le^{-(\theta_\ell \gamma+\nu_\ell)} \right) \times  \sum\limits_{\ell=1}^L \theta_\ell e^{-(\theta_\ell \gamma+\nu_\ell)}.
\end{equation}

{Note that for all the special cases discussed in Section \ref{iid}-Section \ref{inid}, we can use the Gumbel pdf for $f_{\gamma_{e2e,max}}$.} The effective information transmission time decides the achievable throughput, which in this case is defined as follows \cite{gu2015rf}
\begin{equation}
    R_{e2e,max} = (1-\alpha) C_{e2e,max}.
    \label{re2emax}
\end{equation}
Similarly, we can characterise the asymptotic outage capacity of the system using the statistics of the maximum e2e SNR.  Outage capacity is defined as the maximum constant rate that can be maintained over the fading blocks with a specified outage probability \cite{gu2015rf}\footnote{Note that this is particularly useful for slowly varying channels, where the instantaneous SNR remains constant over a  large number of symbols.}. Here, the outage capacity is given by

\begin{equation}
C_{e2e,max}^{out}=\frac{1}{2}\left[1-P_{e2e,max}^{out}(\gamma_{th}) \right] \log_2\left(1+\gamma_{th} \right),
\end{equation}
where $P_{e2e,max}^{out}(\gamma_{th}) $ is the probability of outage for a threshold of $\gamma_{th}$; i.e  $P_{e2e,max}^{out}(\gamma_{th})=\mathbb{P}\left(\gamma_{e2e,max} \leq \gamma_{th} \right)$. Hence, the outage capacity can be easily derived from the asymptotic CDF of $\gamma_{e2e,max}$ as 

\begin{equation}
    C_{e2e,max}^{out} = \frac{\log_2\left(1+\gamma_{th} \right)}{2} \left( 1-F_{\gamma_{e2e,max}}(\gamma_{th}) \right).
\end{equation}

\subsection{Ordering of asymptotic e2e SNR} \label{ordering_e2esnr}

Stochastic ordering allows ordering of RVs with respect to the variations in their parameters. An RV $X$ is said to be stochastically smaller than an RV $Y$ if
\begin{equation}
    \mathbb P(X > z) \leq \mathbb P(Y >z), \ \forall z \in \mathbb{R},
\end{equation} and is written as $X \leq_{st} Y$ \cite{shaked2007stochastic} .
Such an ordering of SNR RVs allows us to study the variations of SNR and hence functions of SNR with changes in the different channel parameters. This will be highly useful for system planning and resource allocation without much computational burden every time a decision is to be made. Stochastic ordering has been effectively used for analysing the performance of various communication systems in works like \cite{srinivasan2018secrecy,srinivasan2019analysis,srinivasan2019analysismrc}.\\
In the following subsections, we establish the stochastic ordering of $\gamma_{e2e, max}$ with respect to variations in the following parameters: (1) source transmit power $P_s$, (2) noise variance $\sigma^2$, (3) TS factor $\alpha$ and (4) PS factor $\lambda$. For all further analysis we make the assumption that $\sigma_D^2 = \sigma_\ell^2=\sigma^2 \ \forall \ \ell \in \{1,\cdots,L \}.$

\subsubsection{Ordering with respect to $P_s$} \label{order1}
 
Let $X_1$ and $X_2$ be the RVs representing the asymptotic maximum e2e SNR with the transmit power $P_1$ and $P_2$ respectively. We further assume that $P_1>P_2$ and the rest of the parameters are considered to be the same for both the RVs\footnote{Note that the RVs $X_1$ and $X_2$ will have the same set of parameters $\{\nu_\ell;\ell=1,\cdots,L\}$ since they are independent of the source transmit power.}. Hence, we have $X_1$ with parameters $\{\theta_\ell^{(1)},\nu_\ell;\ell=1,\cdots,L\}$ and  $X_2$ with parameters  $\{\theta_\ell^{(2)},\nu_\ell;\ell=1,\cdots,L\}$ where $\theta_\ell^{(i)}=\frac{\sigma_\ell^2}{(1-\lambda)P_id_{1,\ell}^{-\zeta}}$ for $i \in \{1,2\}$. $X_2 \leq_{st} X_1$ if the following is true: 
\begin{align}
	\exp\left( -\sum\limits_{\ell=1}^L \exp{(-\theta_\ell^{(2)} z -\nu_\ell)}\right) & \geq 	\exp\left( -\sum\limits_{\ell=1}^L \exp{(-\theta_\ell^{(1)} z -\nu_\ell)}\right) \\ i.e \ 
	\sum\limits_{\ell=1}^L \exp\left(-\frac{\Tilde{\theta}_\ell z}{P_2} -\nu_\ell\right) & \leq \sum\limits_{\ell=1}^L \exp\left(-\frac{\Tilde{\theta}_\ell z}{P_1} -\nu_\ell\right),
	\label{order_ps_eqn2}
\end{align}
where 
\begin{equation}
{\theta}_\ell^{(i)}  = \frac{\Tilde{\theta}_\ell}{P_i}.
\end{equation}
Upon further rearrangement, (\ref{order_ps_eqn2}) can be re-written as follows
\begin{equation}
	\sum\limits_{\ell=1}^L \exp(-\nu_\ell)\left \lbrace \underbrace{\exp\left(-\frac{\Tilde{\theta}_\ell z}{P_2} \right) - \exp\left(-\frac{\Tilde{\theta}_\ell z}{P_1}\right)}_\text{Term 1}\right \rbrace \leq 0.
	\label{order_ps}
\end{equation}
Note that for $P_1>P_2$ and $z>0$ term 1 will be negative for all values of $\ell$ and hence the inequality in (\ref{order_ps}) holds for all choices of $\Tilde{\theta_\ell}$, $z$ and $\nu_\ell$. Thus, we conclude that $X_2$ is stochastically smaller than $X_1$ when $P_2<P_1$. Note that this observation is intuitive since the e2e SNR is expected to increase with the increase in transmit power. Nevertheless, this result reaffirms the utility of our asymptotic result in deriving meaningful inferences about the system performance with respect to different system parameters.  

\subsubsection{Ordering with respect to $\alpha$} \label{order2}

Let $X_1$ and $X_2$ be the RVs representing the asymptotic maximum e2e SNR with the TS factor $\alpha_1$ and $\alpha_2$ respectively. We further assume that $\alpha_1>\alpha_2$, RV $X_1$ has parameters $\{\theta_\ell,\nu_\ell^{(1)};\ell=1,\cdots,L\}$ and  $X_2$ has parameters  $\{\theta_\ell,\nu_\ell^{(2)};\ell=1,\cdots,L\}$ where $\nu_\ell^{(i)} = \frac{(1-\lambda)(1-\alpha_i)}{\eta  (2\alpha_i+\lambda(1-\alpha_i))d_{2,\ell}^{-\zeta}}$ for $i \in \{1,2\}$. Here, $X_2 \leq_{st} X_1$ if the following is true:
\begin{align}
	\sum\limits_{\ell=1}^L \exp{(-\theta_\ell z -\Tilde{\nu}_\ell\Tilde{\alpha}_2)} & \leq 	\sum\limits_{\ell=1}^L \exp{(-\theta_\ell z -\Tilde{\nu}_\ell\Tilde{\alpha}_1)},
	\label{order_alpha_eqn2}
\end{align}
where 
\begin{align}
    {\nu}_\ell^{(i)} = \Tilde{\nu}_\ell \Tilde{\alpha}_i, \  \Tilde{\nu}_\ell = \frac{(1-\lambda)}{\eta \times d_{2,\ell}^{-\zeta}} \ \text{and} \ \Tilde{\alpha}_i = \frac{1-\alpha_i}{2\alpha_i+\lambda(1-\alpha_i)}.
\end{align}
Upon further rearrangement, (\ref{order_alpha_eqn2}) can be re-written as follows
\begin{equation}
	\sum\limits_{\ell=1}^L \exp(-\theta_\ell)\left \lbrace \underbrace{\exp\left(-\Tilde{\nu}_\ell\Tilde{\alpha}_2 \right) - \exp\left(-\Tilde{\nu}_\ell\Tilde{\alpha}_1 \right)}_\text{Term 2}\right \rbrace \leq 0.
	\label{order_alpha}
\end{equation}

Further analysis shows that for, $\alpha_2<\alpha_1$, $\Tilde{\alpha}_2>\Tilde{\alpha}_1$. Hence, term 2 will be negative for all values of $\ell$ and hence the inequality in (\ref{order_alpha}) holds for all choices of $\Tilde{\nu_\ell}$ and $\theta_\ell$. Thus, we conclude that $X_2$ is stochastically smaller than $X_1$ when $\alpha_2<\alpha_1$. This means that the maximum e2e SNR increases with an increase in the TS factor i.e with increase in the time over which energy is harvested. However, note that we cannot choose $\alpha=1$, since this would mean that the whole time slot is utilised for energy harvesting and no time is allocated for information transfer. This means, practically we are constrained to choose a maximum value of $\alpha$ that still reserves time for information transmission both from the source to the relay and from the relay to the destination.

\subsubsection{Ordering with respect to $\sigma^2$} \label{order3}

Let $X_1$ and $X_2$ be the RVs representing the asymptotic maximum e2e SNR with the noise power $\sigma_1^2$ and $\sigma_2^2$ respectively. We further assume that $\sigma_1^2<\sigma_2^2$, RV $X_1$ has parameters $\{\theta_\ell^{(1)},\nu_\ell;\ell=1,\cdots,L\}$ and  $X_2$ has parameters  $\{\theta_\ell^{(2)},\nu_\ell;\ell=1,\cdots,L\}$ where $\theta_\ell^{(i)}=\frac{\sigma_i^2}{(1-\lambda)P_sd_{1,\ell}^{-\zeta}}$ for $i \in \{1,2\}$. Following the analysis similar to the case of $P_s$, we can infer that $X_2$ is stochastically smaller than $X_1$ in this case. 

\subsubsection{Ordering with respect to $\lambda$} \label{order4}

Let $X_1$ and $X_2$ be the RVs representing the asymptotic maximum e2e SNR with the PS factor $\lambda_1$ and $\lambda_2$ respectively. We further assume that $\lambda_1<\lambda_2$, RV $X_1$ has parameters $\{\theta_\ell^{(1)},\nu_\ell^{(1)};\ell=1,\cdots,L\}$ and  $X_2$ has parameters  $\{\theta_\ell^{(2)},\nu_\ell^{(2)};\ell=1,\cdots,L\}$ where 
\begin{align}
  \theta_\ell^{(i)} = \frac{\sigma_\ell^2}{(1-\lambda_i)P_s d_{1,\ell}^{-\zeta}} \ \text{and} \ \nu_\ell^{(i)} = \frac{(1-\lambda_i)(1-\alpha)\sigma_D^2}{\eta \sigma_\ell^2 (2\alpha+\lambda_i(1-\alpha))d_{2,\ell}^{-\zeta}}.  
\end{align}
Furthermore, we define
\begin{align}
     \theta_\ell^{(i)}=\frac{\Tilde{\theta_\ell}}{1-\lambda_i}, \   & \nu_\ell^{(i)}=\Tilde{\nu}_\ell\Tilde{\lambda}_i, \\
     \Tilde{\nu}_\ell = \frac{d_{2,\ell}^{\zeta}(1-\alpha)}{\eta} \ & \text{and} \  \Tilde{\lambda}_i = \frac{1-\lambda_i}{2\alpha+\lambda_i(1-\alpha)} \ \text{for} \ i \in \{1,2\}.
\end{align}
Here, $X_2 \leq_{st}  X_1$ if the following is true: 
\begin{equation}
    \sum\limits_{\ell=1}^L \underbrace{\exp\left(-\frac{\Tilde{\theta}_\ell z }{1-\lambda_2}\right)}_{\text{Term 3}}  \underbrace{\exp(-\Tilde{\nu}_\ell\Tilde{\lambda_2})}_{\text{Term 4}}  - \underbrace{\exp\left(-\frac{\Tilde{\theta}_\ell z }{1-\lambda_1}\right)}_{\text{Term 5}} \underbrace{\exp(-\Tilde{\nu}_\ell\Tilde{\lambda_1})  \leq 0}_{\text{Term 6}}. 
    \label{order_lambda}
\end{equation}
Unlike the case of TS factor $\alpha$, the ordering with respect to $\lambda$ does not remain the same for all values of $\theta_\ell$ and $\nu_\ell$, $\ell=1,\cdots,L$. Here, note that for $\lambda_1,\lambda_2 \in (0,1)$ and $\lambda_1<\lambda_2$, term 3 is smaller than term 5 and term 4 is larger than term 6. Hence, the sign of left-hand-side (LHS) of (\ref{order_lambda}) will depend on whether Term 3 (or Term 5) dominates Term 4 (or Term 6) or otherwise, in each of the $\ell$ sum terms. Note that in the high SNR scenario, $\Tilde{\theta}_\ell$ will be small since $\Tilde{\theta}_\ell$ is inversely proportional to $\gamma_s=\frac{P_s}{\sigma^2}$. Now if $\lambda_i$ is not very close to 1, the product terms in (\ref{order_lambda}) will be dominated by term 4 and term 6. If $\lambda_i$ is close to 1, the denominator of the exponent of terms 3 and 5 will tend to zero. Similarly, for the low SNR regime $\Tilde{\theta}_\ell$ will be large and hence smaller values of $\lambda_i$ will increase the value of the product terms. However, for all the cases in between the high and low SNR values we cannot have a general conclusion about the ordering of $\gamma_{e2e,max}$ with respect to the variations in the PS factor $\lambda$.
\par Note that it is not easy to derive the above inferences using the exact expression of the maximum CDF. The simple form of the asymptotic maximum CDF was instrumental in simplifying the above analysis. Given that we have established the ordering of $\gamma_{e2e, max}$ with respect to variations in $P_s,\alpha$ and $\sigma^2$, we can extend this to the case of asymptotic ergodic capacity by making use of the following result from the theory of stochastic ordering.

\begin{lemma} \label{sto_order}
RV $X$ is stochastically less than or equal to RV $Y$ if and only if the following holds for all increasing functions $\phi(.)$ for which the expectations exist : 
\begin{equation}
    \mathbb{E}[\phi(X)] \leq \mathbb{E}[\phi(Y)].
\end{equation}
\end{lemma}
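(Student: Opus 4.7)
The plan is to prove the if-and-only-if in the two natural directions: the forward implication will follow from the tail-integral (layer-cake) representation of expectation, while the converse will follow by specialising $\phi$ to an indicator function.

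For the forward direction, I would assume $X \leq_{st} Y$, i.e., $\mathbb{P}(X > z) \leq \mathbb{P}(Y > z)$ for every $z \in \mathbb{R}$, and let $\phi$ be any non-decreasing function for which both $\mathbb{E}[\phi(X)]$ and $\mathbb{E}[\phi(Y)]$ exist. The first key observation is that, because $\phi$ is monotone, for every $t \in \mathbb{R}$ the level set $\{x : \phi(x) > t\}$ is a half-line of the form $(c_t,\infty)$ or $[c_t,\infty)$, where $c_t = \inf\{x : \phi(x) > t\}$; the distinction concerns only the at-most-countable set of jump points of $\phi$ and does not affect probabilities when integrated in $t$. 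This immediately gives $\mathbb{P}(\phi(X) > t) = \mathbb{P}(X > c_t) \leq \mathbb{P}(Y > c_t) = \mathbb{P}(\phi(Y) > t)$ pointwise in $t$. I would then invoke the layer-cake identity
\begin{equation*}
\mathbb{E}[\phi(X)] = \int_0^\infty \mathbb{P}(\phi(X) > t)\,dt - \int_{-\infty}^0 \mathbb{P}(\phi(X) \leq t)\,dt,
\end{equation*}
together with the analogous one for $Y$, and integrate the pointwise tail inequality to obtain $\mathbb{E}[\phi(X)] \leq \mathbb{E}[\phi(Y)]$.

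For the converse, I would choose, for each fixed $z \in \mathbb{R}$, the non-decreasing test function $\phi_z(x) = \mathbb{I}_{x > z}$. The hypothesis then yields $\mathbb{E}[\phi_z(X)] = \mathbb{P}(X > z) \leq \mathbb{P}(Y > z) = \mathbb{E}[\phi_z(Y)]$ for every $z$, which is precisely the definition $X \leq_{st} Y$ used in Section \ref{ordering_e2esnr}. If one wishes to interpret \emph{increasing} strictly, the indicator can be approximated by the strictly increasing perturbations $\phi_z^{\epsilon}(x) = \phi_z(x) + \epsilon x$ on bounded intervals and passed to the limit $\epsilon \to 0^+$ via bounded convergence.

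The main obstacle is essentially bookkeeping: accounting for the countable set of jump discontinuities of a general monotone $\phi$, and justifying the interchange of limit and integral in the layer-cake identity for signed integrands. These are standard measure-theoretic issues, and since the tail inequality $\mathbb{P}(\phi(X) > t) \leq \mathbb{P}(\phi(Y) > t)$ holds pointwise in $t$ and both expectations are assumed to exist, no idea beyond the definition of stochastic order and the tail representation of expectation is needed. Because this is a classical characterisation (see, e.g., Shaked and Shanthikumar's monograph already cited in the excerpt), one could alternatively simply quote it; the sketch above records the short self-contained derivation.
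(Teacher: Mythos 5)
The paper offers no proof of this lemma at all: it simply points the reader to Chapter 1 of Shaked and Shanthikumar, which is precisely the alternative you mention in your closing sentence. Your self-contained argument is the standard textbook proof and is correct. The converse via the indicator test functions $\phi_z(x)=\mathbb{I}_{x>z}$ is immediate; note that in the stochastic-ordering literature (and in the way the paper uses the lemma) ``increasing'' means non-decreasing, so the strict-monotonicity patch with $\phi_z+\epsilon x$ is unnecessary — and as written it is the one shaky spot, since $\mathbb{E}[\epsilon X]$ need not exist and the restriction ``on bounded intervals'' would need to be made precise. For the forward direction, your identification of $\{x:\phi(x)>t\}$ as an upper half-line plus the layer-cake identity is sound, but the countable-jump-set bookkeeping can be avoided entirely: the inequality $\mathbb{P}(X\in A)\le\mathbb{P}(Y\in A)$ holds for \emph{every} upper set $A$, whether open or closed at its left endpoint $c_t$, because $\mathbb{P}(X\ge c_t)=\lim_{z\uparrow c_t}\mathbb{P}(X>z)\le\lim_{z\uparrow c_t}\mathbb{P}(Y>z)=\mathbb{P}(Y\ge c_t)$. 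Hence the tail inequality $\mathbb{P}(\phi(X)>t)\le\mathbb{P}(\phi(Y)>t)$ holds for every $t$, and integrating the layer-cake representation (legitimate since both expectations are assumed to exist, so the positive and negative tail integrals are individually controlled) completes the argument.
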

The above lemma is discussed in detail in chapter 1 of \cite{shaked2007stochastic}. Using Lemma \ref{sto_order} we can easily extend the ordering results in section \ref{order1}.\ref{order4} to the ordering of asymptotic ergodic capacity $C_{e2e,max}$. This in turn allows us to make inferences about the changes in the asymptotic ergodic capacity with respect to variations in the system parameters easily. Note that such observations are otherwise difficult to be derived directly from the integral expression for ergodic capacity given in (\ref{asym_erg_intg}).

\section{Optimal TS and PS parameter} \label{opti_probsection}

Note that the statistics of the e2e SNR depends on the choice of TS and PS factors. In this section, we discuss one possible method to choose the optimal TS and PS factor for (a) minimising the outage probability and (b) maximising the ergodic capacity. 
\subsection{Minimising outage probability}
First, we consider the problem of choosing the optimal TS and PS factor that minimises the outage probability at the destination node. More precisely, we look at the following optimisation problem : 
\begin{subequations}
\begin{alignat}{2}
& \min_{\alpha,\lambda}        &\qquad& {\exp(-\sum\limits_{\ell=1}^L \exp{(-\theta_\ell \gamma -\nu_\ell)})}\label{eq:optProb}\\
&\text{subject to} &      & 0 \leq \alpha \leq \alpha^{max},\label{eq:constraint1}\\
&                  &      & 0 \leq \lambda \leq \lambda^{max},\label{eq:constraint2}
\end{alignat}
\label{opti_prob}
\end{subequations}
where $\alpha^{max}$ and $\lambda^{max}$ are the maximum values of $\alpha$ and $\lambda$, feasible within the hardware constraints of the system. Here, the objective function (\ref{eq:optProb}) represents the outage probability at the destination for TS and PS factor $\alpha$ and $\lambda$ respectively. Note that, $\alpha=1$ and $\lambda=1$ would  mean that the relays only harvest energy and does not transmit any information, and cannot be a valid choice. 
By monotonicity of the logarithm function, the optimal solution for the above optimisation problem would remain unchanged even if the objective function is replaced with log of the outage probability. Hence, (\ref{eq:optProb}) can now be replaced with
\begin{equation}
    \min_{\alpha,\lambda}        \qquad {-\sum\limits_{\ell=1}^L \exp{(-\theta_\ell \gamma -\nu_\ell)}}.\label{eq:optProb_log}
\end{equation}The minimisation of the 
above objective is equivalent to the maximisation of the negative of the same \cite{boyd2004convex}. Thus, the optimisation problem in (\ref{opti_prob}) can be rewritten as follows: 
\begin{subequations}
\begin{alignat}{2}
& \max_{\alpha,\lambda}        &\qquad& {\sum\limits_{\ell=1}^L \exp{(-\theta_\ell \gamma -\nu_\ell)}}\label{eq:optProb1}\\
&\text{subject to} &      & 0 \leq \alpha \leq \alpha^{max},\label{eq:constraint11}\\
&                  &      & 0 \leq \lambda \leq \lambda^{max}.\label{eq:constraint21}
\end{alignat}
\label{opti_prob1}
\end{subequations}
The objective function in (\ref{eq:optProb1}) is neither convex nor concave. So the next step is to see what optimisation algorithm we can use to identify the local maximum of the above objective function. In the following paragraph, we demonstrate how the stochastic ordering results of Section III.B can be used to simplify the above optimisation problem.
\par From the ordering results in Section III.B, we know that the objective function increases with an increase in $\alpha$. Hence, the optimal TS factor ($\alpha^*$) for the optimisation problem in (\ref{opti_prob1}) will be $\alpha^{max}$. Thus, the original bi-variate optimisation problem can now be solved as an uni-variate optimisation problem. Next, to identify the optimal PS factor that minimises the probability of outage, we have to search for the local maximum of the objective function (\ref{eq:optProb1}) evaluated at $\alpha=\alpha^*$. There are a number of algorithms for finding the optima of non-convex, nonlinear optimisation problems. Here, we propose to use the method of sequential quadratic programming (sqp) to find the optimum solution \cite{websqp}, which under specific conditions is proven to demonstrate faster convergence as compared to algorithms like the interior point method. Furthermore, we can make use of the stochastic ordering results in high and low SNR regimes to do clever initialisation of the sqp algorithm and thus accelerate the convergence of the algorithm. More details regarding the initialisation of the algorithm are presented along with the simulation results in Section \ref{Simulation}.

\subsection{Maximising ergodic capacity}
In scenarios where the e2e capacity is more important than the outage probability at the destination, the following optimisation problem can be solved to find the optimal TS and PS factor that maximises the asymptotic ergodic capacity.
\begin{subequations}
\begin{alignat}{2}
& \max_{\alpha,\lambda}        &\qquad& C_{e2e,max}\label{eq:optProb_ce2e}\\
&\text{subject to} &      & 0 \leq \alpha \leq \alpha^{max},\label{eq:constraint1_ce2e}\\
&                  &      & 0 \leq \lambda \leq \lambda^{max},\label{eq:constraint2_ce2e}
\end{alignat}
\label{opti_prob_ce2e}
\end{subequations}
where $\alpha^{max}$ and $\lambda^{max}$ are the maximum values of TS and PS factors feasible within the hardware constraints of the system. Now, similar to the previous sub-section we can use the stochastic ordering results to simplify the above optimisation problem. Using Lemma 2 from Section III.B, we arrive at the conclusion that the asymptotic ergodic capacity $C_{e2e,max}$ increases with an increase in the TS factor $\alpha$. Hence, the optimal value of TS factor for the above optimisation problem is given by $\alpha^*=\alpha^{max}$. The bi-variate optimisation problem in (\ref{opti_prob_ce2e}) can thus be solved using the uni-variate optimisation problem given in (\ref{opti_prob_ce2e_new}) where $C_{e2e,max}(\alpha^{max},\lambda)$ is the asymptotic ergodic capacity evaluated at $\alpha=\alpha^{max}$. Note that the system hardware constraints decide the resolution with which power splitting can be implemented and hence decide the possible choices for $\lambda$. Thus, we replace the constraint in (\ref{eq:constraint2_ce2e}) with the constraint in (\ref{eq:constraint1_ce2e_new}) where $\Lambda$ is a finite set of all the possible values of $\lambda$. 
\begin{subequations}
\begin{alignat}{2}
& \max_{\lambda}         & C_{e2e,max}(\alpha^{max},\lambda)\label{eq:optProb_ce2e_new}\\
&\text{subject to}  \quad &      \lambda \in  \Lambda.\label{eq:constraint1_ce2e_new}
\end{alignat}
\label{opti_prob_ce2e_new}
\end{subequations}
Since a simple closed form expression of $C_{e2e,max}$ is not available, we propose to use a simple line search algorithm to find the optimal value of $\lambda$. The integral expression for $C_{e2e,max}$ can be easily evaluated using numerical integration methods, for example using the \textit{NIntegrate} method available in \textit{Mathematica}.
\subsection{Summary of key insights}
We now present the key insights from the analysis in Section III-IV below.
\begin{itemize}
\item {We prove that the distribution of an appropriately normalised maximum e2e SNR RV converges to the distribution of the Gumbel RV, where the maximum is evaluated over a set of i.n.i.d. RVs.}
\item {We applied these results and studied the distribution of the maximum e2e SNR RV, $\max \{\gamma_{e2e,\ell}\}_{\ell=1}^L$ for large $L$ and the corresponding CDF is given by}
    \begin{align}
        & F_{\gamma_{e2e,max}}(\gamma) = \exp(-u(\gamma)) \  \ \text{where} \label{unnorm_dist}\\ 
    & u(\gamma)  = \sum\limits_{\ell=1}^L \exp{(-\theta_\ell \gamma -\nu_\ell)},  \  \theta_\ell = \frac{\sigma_\ell^2}{(1-\lambda)P_s d_{1,\ell}^{-\zeta}} \ \text{and} \ \nu_\ell = \frac{(1-\lambda)(1-\alpha)\sigma_D^2}{\eta \sigma_\ell^2 (2\alpha+\lambda(1-\alpha))d_{2,\ell}^{-\zeta}}.
    \end{align}
\item {The above distribution in (\ref{unnorm_dist}) is both easy to evaluate and simple to analyse when compared to the exact distribution of the maximum RV which is given by
    \begin{equation}
      F_{\gamma_{e2e,max}^L}(\gamma) = \prod\limits_{\ell=1}^L 1-\theta_\ell \sum\limits_{k=0}^ \infty \frac{(-\nu_\ell)^k}{k!} \gamma E_k(\theta_\ell \gamma),
    \end{equation}where $E_k(.)$ is the exponential integral function.} 
\item {The distribution in (\ref{unnorm_dist}) is close to the exact distribution of the maximum even for moderate values of $L$. Please see simulations in Section V for more details. Hence, this asymptotic maximum order statistics can be used for performance analysis and resource planning of the dual hop SWIPT CR system.} 
\item {We also derived simple integral expressions for the ergodic capacity and achievable throughput of the CR system using the distribution of $\gamma_{e2e,max}$. Note that deriving the expressions for the achievable throughput and outage capacity using the exact maximum statistics would have been an computationally intensive task. }
\item {Using the results derived, we establish the stochastic ordering of $\gamma_{e2e,max}$ with respect to the variations in source transmit power, noise variance, TS factor and PS factor. Arriving at such conclusions using simple algebra would not have been possible using the exact order statistics.}
\item {Finally, the utility of our results is emphasised by demonstrating how the CDF expressions and the stochastic ordering results could be used to select the optimal TS and PS factor for minimising the outage probability and maximising the ergodic capacity at the destination. }
\end{itemize}

\section{Simulation Results} \label{Simulation}
\subsection{Motivation for the analysis of extreme statistics of i.n.i.d sequences of RVs}
 \par Before we begin with the detailed discussion of the simulation results for the i.n.i.d. scenario, we motivate the importance of the analysis through an example. In most of the practical scenarios, we are interested in analysing the maximum or the minimum statistics over a sequence of i.n.i.d. RVs. However, several works in the literature assume them to be i.i.d. RVs and the right method for approximating a sequence of i.n.i.d. RVs with a sequence of i.i.d. RVs is an interesting problem in itself.
\begin{figure}[t]
	\centering
		\begin{minipage}[t]{0.45\textwidth}
		\includegraphics[scale=0.4]{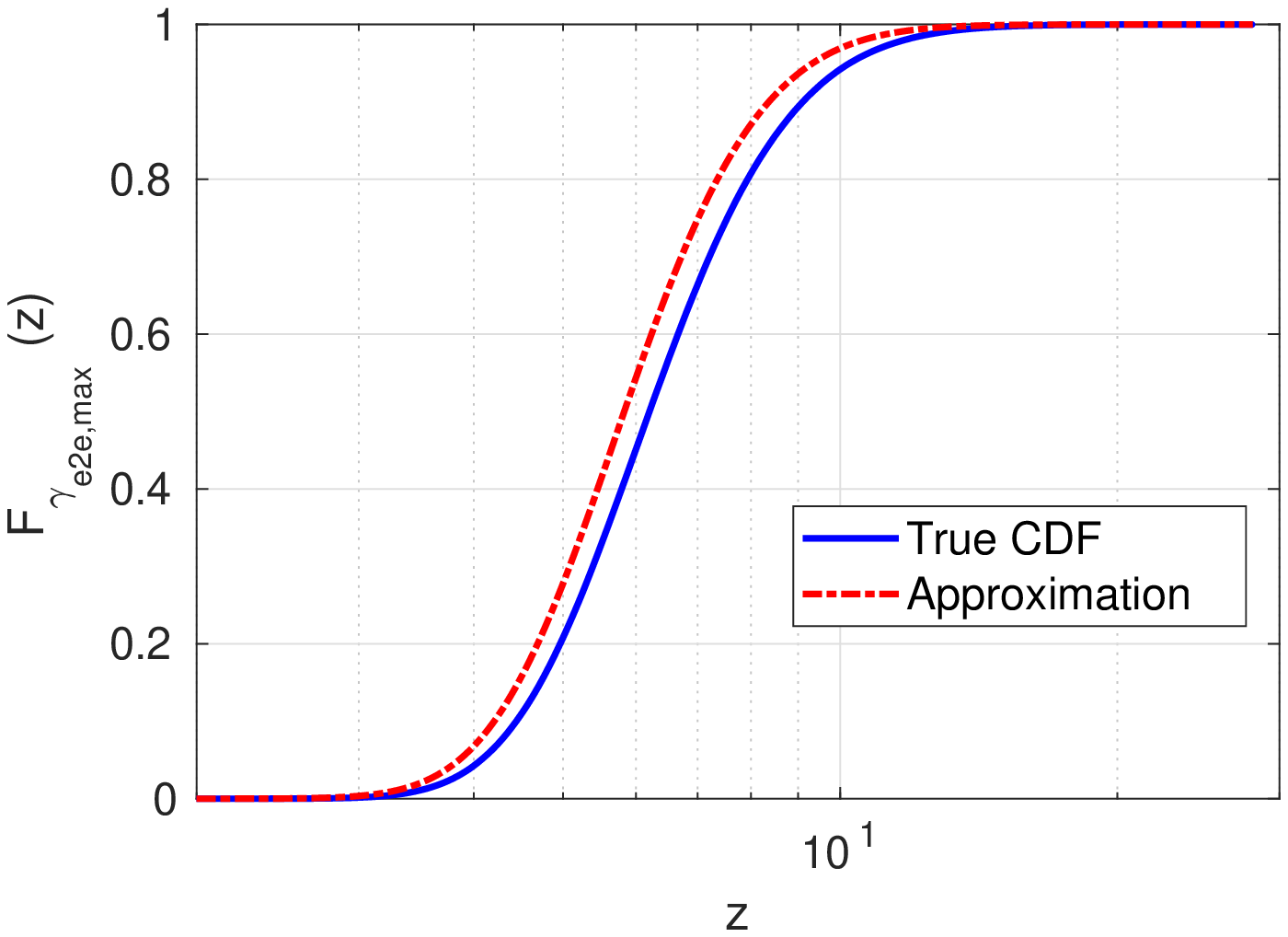}
		\caption{ CDF of $\gamma_{e2e,max}$ with $\theta_\ell=1;1\leq \ell\leq \frac{L}{2}$ and $\theta_\ell=3;\frac{L}{2} < \ell\leq L$}
		\label{e2e_cdf_t2}
	\end{minipage}
	\begin{minipage}[t]{0.45\textwidth}
		\includegraphics[scale=0.4]{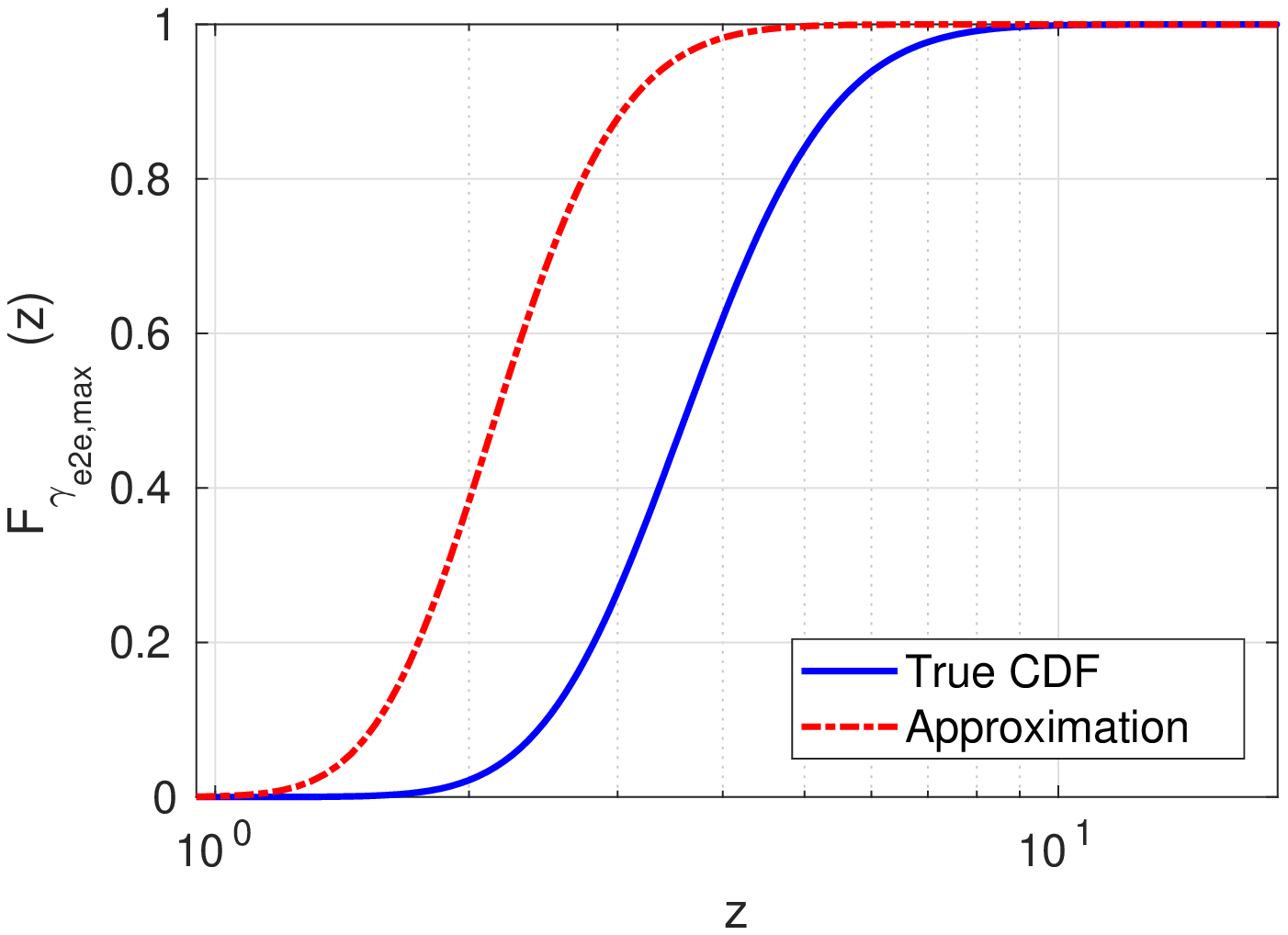}
		\caption{CDF of $\gamma_{e2e,max}$ with $\theta_\ell \sim \text{Uniform Distribution}(1,3)$}
		\label{e2e_cdf_t1}		
	\end{minipage}
\end{figure}
In the following two experiments, we consider $L$ i.n.i.d. RVs of the form $\gamma_{e2e,\ell}$ each with CDF given by Lemma \ref{cdf_e2e_thm}. We choose the parameters as $\nu_\ell=\nu=0.2 \ \forall \ \ell \in \{1,\cdots,L\}$ and $\theta_{\ell}$ to be non-identical across the RVs with values chosen from the interval $[1,3]$. Here, the solid curve in Fig \ref{e2e_cdf_t2} and Fig \ref{e2e_cdf_t1} shows the empirical CDF of the maximum of such RVs over sequences of length $L=64$. Now, if we approximate this sequence of RVs with an sequence of i.i.d. RVs each with $\theta_\ell=\theta$ chosen to be the mean of the non-identical parameters i.e $\theta=\frac{1}{L}\sum\limits_{\ell=1}^L \theta_\ell$, the asymptotic distribution of the maximum will be a Gumbel distribution with location and scale parameters as discussed in Section \ref{asymp_derive_sec}. This approximate CDF generated using the theoretical distribution function of the Gumbel RV is plotted as the red dashed curve in the figures. Here we notice that the above approximation is a good choice for the first case whereas, in the second case, the theoretical approximation and the true simulated CDF are not close. In other words, for i.n.i.d. RVs, an approximation which works well for one set of values may be poor for another set of values. Hence, directly deriving the asymptotic distribution of the maximum of the sequence of i.n.i.d. RVs can give more accurate inferences about system performance and utility.
 \begin{figure}[t]
	\centering
		\begin{minipage}[t]{0.45\textwidth}
		\includegraphics[scale=0.45]{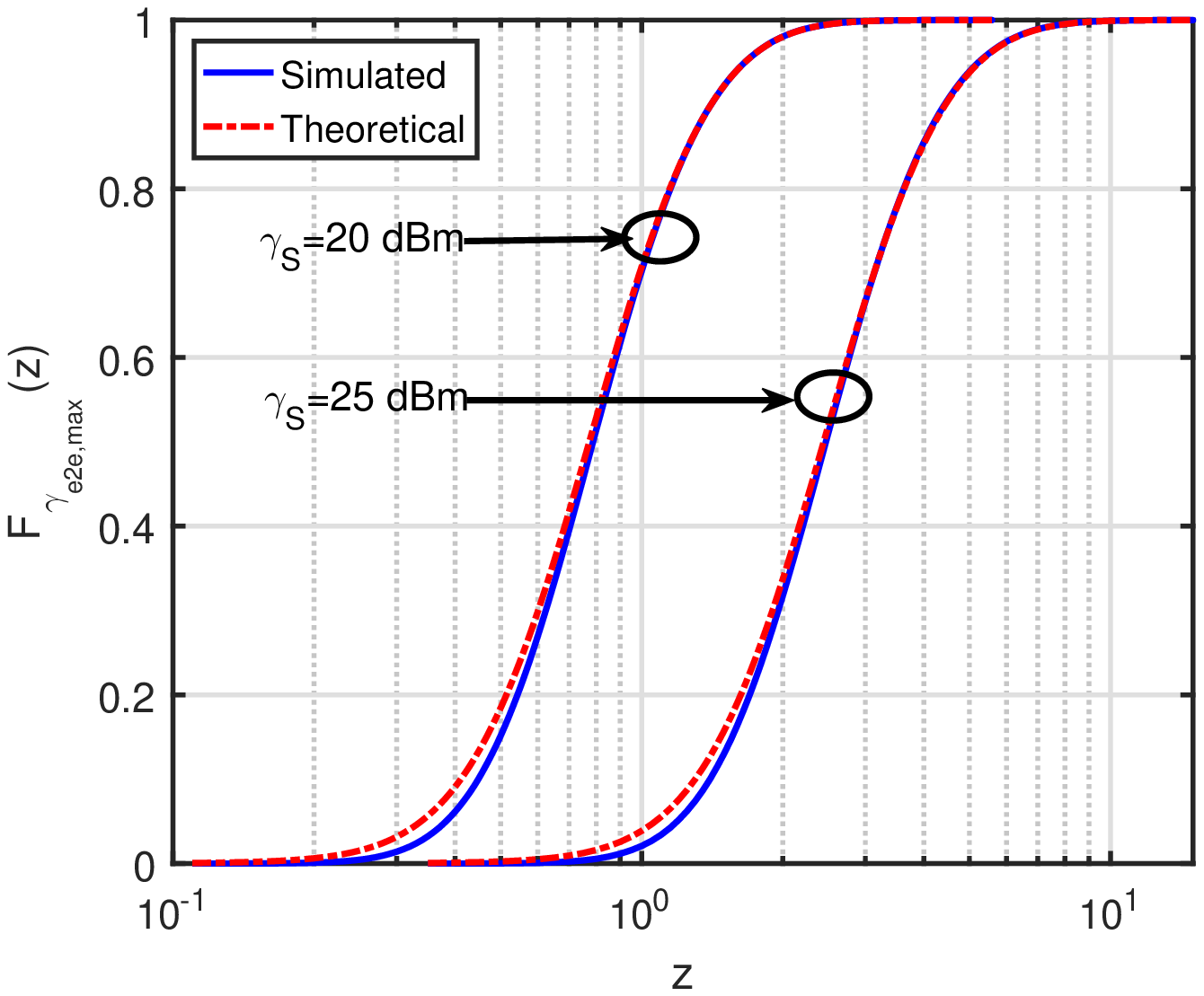}
		\caption{CDF of $\gamma_{e2e,max}$ with $L=15$.}
		\label{cdf_L_15}
	\end{minipage}
	\begin{minipage}[t]{0.45\textwidth}
		\includegraphics[scale=0.45]{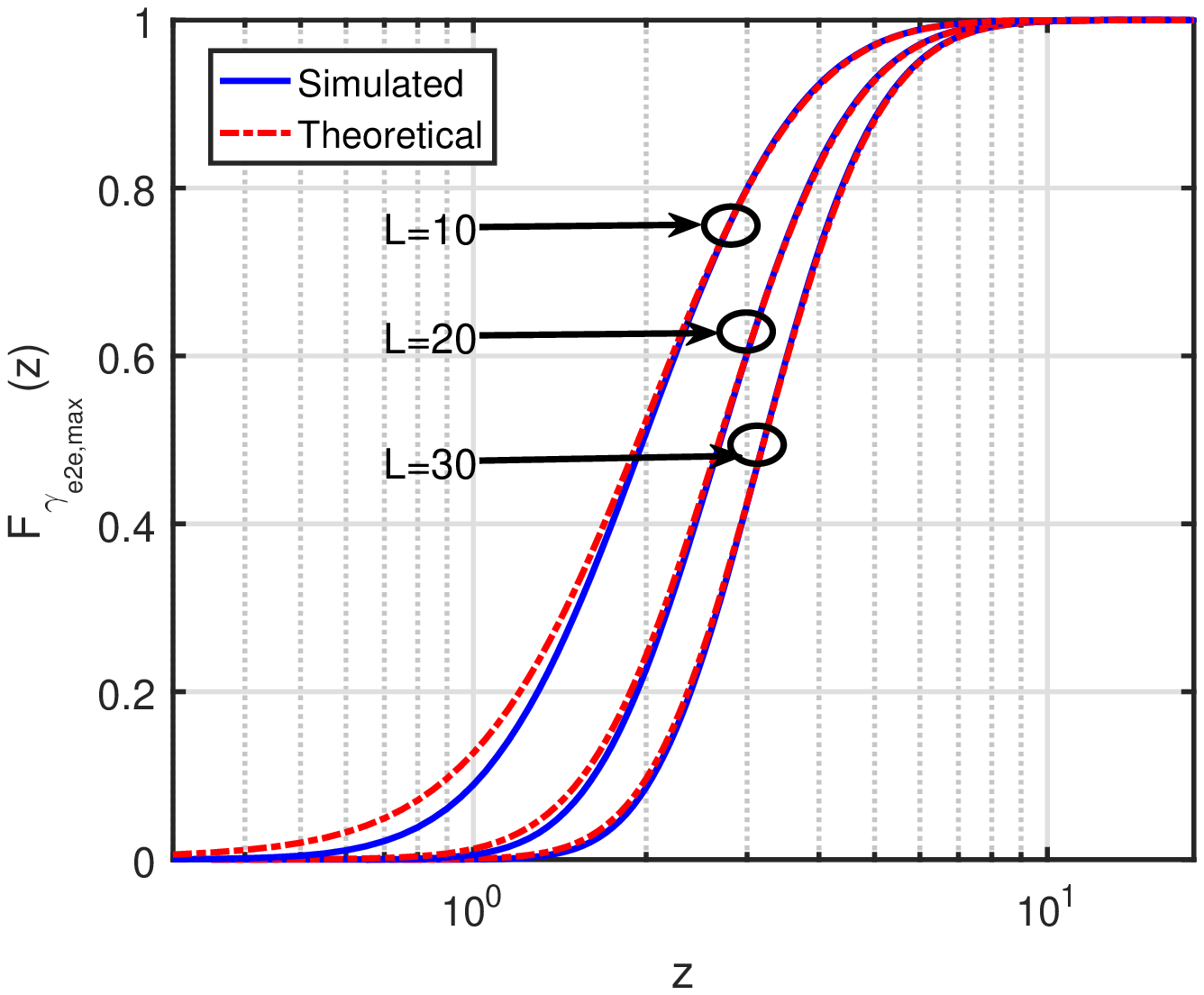}
		\caption{CDF of $\gamma_{e2e,max}$ with $\gamma_S=25 \ dBm$.}
		\label{cdf_gammas}		
	\end{minipage}
 \end{figure}
\subsection{Simulations for the results in Section \ref{asymp_derive_sec}}
 Next, we present results of simulation experiments to demonstrate the validity of the asymptotic distribution derived in Section III. Here, we choose the noise power to be identical at all the relays as well as the destination and we define $\gamma_s:= \frac{P_s}{\sigma^2}$ where $\sigma^2$ is the noise power. Throughout the simulations, we have chosen $\gamma_s=25$dBm, $\eta=0.9$, $L=20$, $\alpha=0.3$, $\lambda=0.4$ and $\gamma_{th}=1$ dB unless stated otherwise. Furthermore, we assume that the straight line distance between source and destination is normalised to unity. The distance from the source to the relays and relays to the destination are then uniformly chosen from intervals $(0.5,0.8)$ and $(0.5,0.7)$ respectively. Here, Fig \ref{cdf_L_15} and Fig \ref{cdf_gammas} show the simulated and theoretical CDF of $\gamma_{e2e,max}$ for different values of $L$ and $\gamma_s$. From the figures, we see that the asymptotics hold good even when the maximum SNR is evaluated over a small number of relays, $L$. Furthermore, we can see that the convergence of the exact distribution of the maximum to the asymptotic distribution improves with an increase in $L$.  Fig \ref{the_outage_cap} shows the theoretical values of outage capacity $C_{e2e,max}^{out}$ for different combinations of TS and PS factors. Here we notice that the outage capacity decreases significantly with increasing $\lambda$. 
\subsection{Simulations for the results in Section \ref{erg_cap_out_cap}}
{Next, we present simulation results to validate the convergence of the ergodic capacity to the proposed value of asymptotic ergodic capacity. Fig \ref{L_vs_rate_1} shows the simulated and theoretical values of achievable throughput for different values of $\gamma_s$ and $L$. We see that the simulated and theoretical values are in good agreement for all values of $L>10$. This validates the utility of the asymptotic results in many system planning problems where achievable throughput is the factor of interest. The variation in the theoretical values of achievable throughput for different combinations of TS and PS factors are given in Fig \ref{the_alpha_lambda_vary_rate_test3_th_1db}. It is to be noted that the achievable throughput does not show the same trend as the outage capacity but decreases with increase in $\alpha$ beyond a certain value. }

\begin{figure}[t]
    \centering
		\begin{minipage}[t]{0.5\textwidth}
		\includegraphics[scale=0.5]{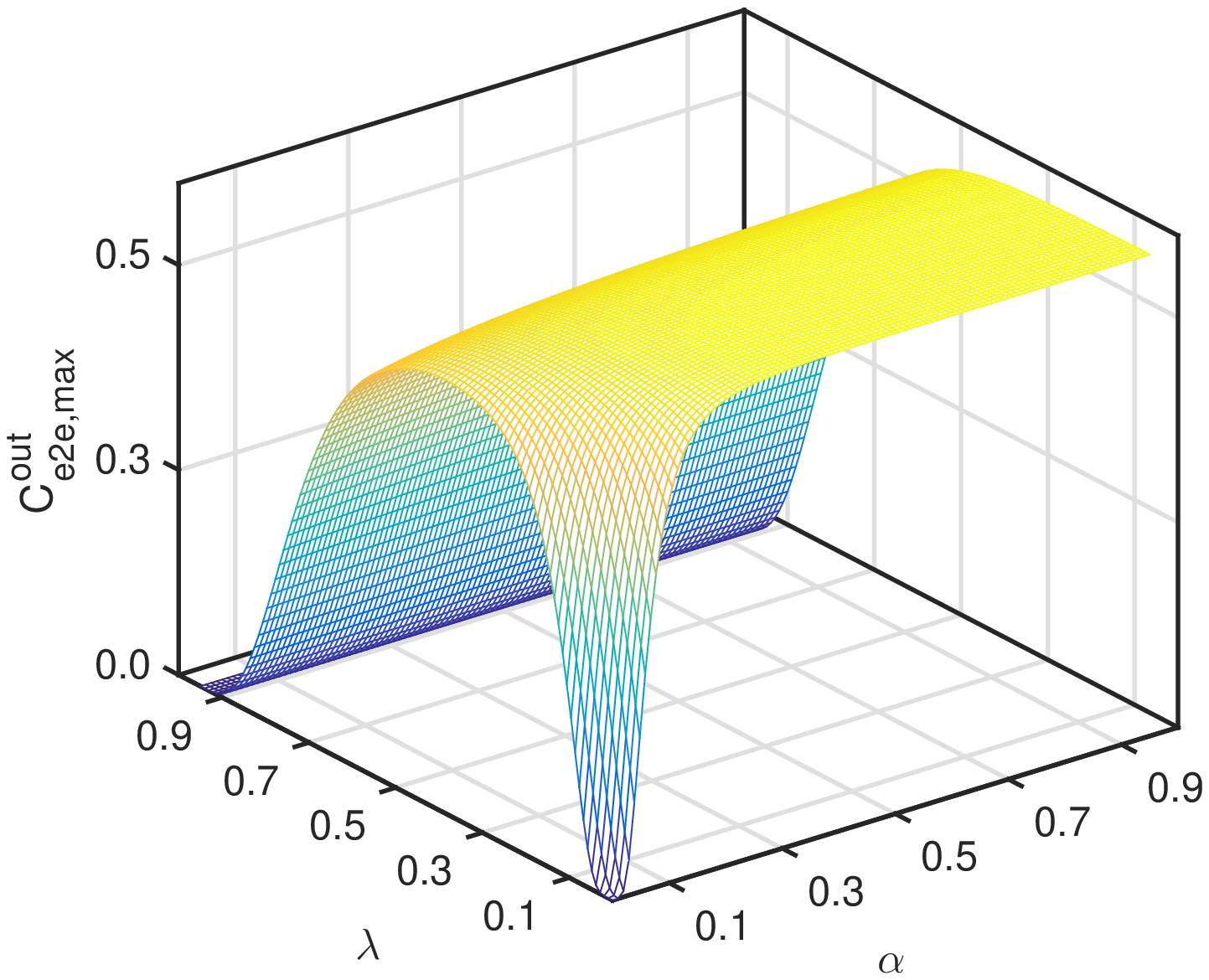}
		\caption{Outage capacity (in bits/sec/Hz) as a function of $\alpha$ and $\lambda$ with $\gamma_{th}=1 \ dB$.}
		\label{the_outage_cap}
	\end{minipage}
	\begin{minipage}[t]{0.45\textwidth}
		\includegraphics[scale=0.5]{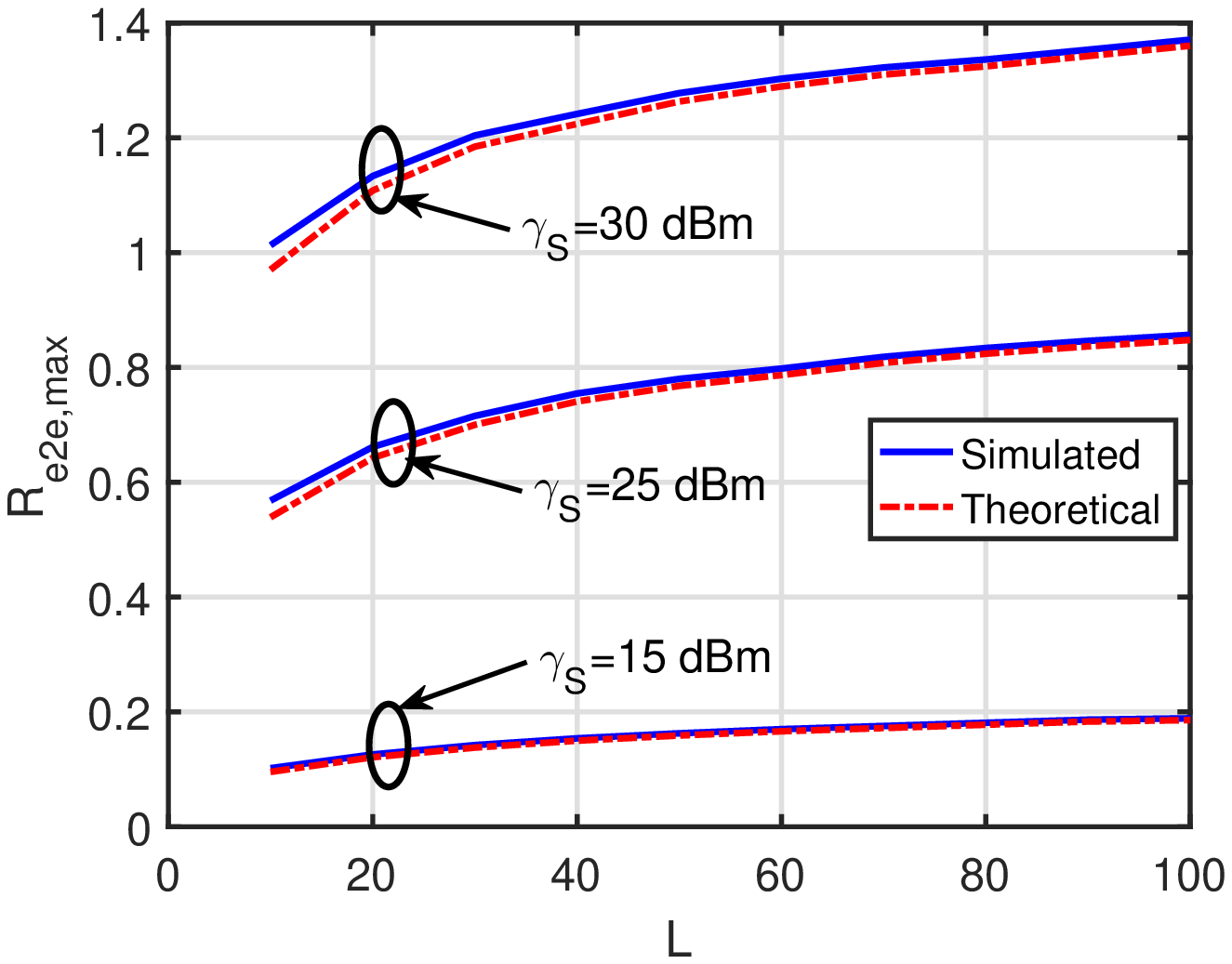}
		\caption{$R_{e2e,max}$ (in bits/sec/Hz) as a function of $L$.}
		\label{L_vs_rate_1}	
	\end{minipage}
 \end{figure}
 
\begin{figure}[t]
	\centering
	\begin{minipage}[t]{0.45\textwidth}
		\includegraphics[scale=0.5]{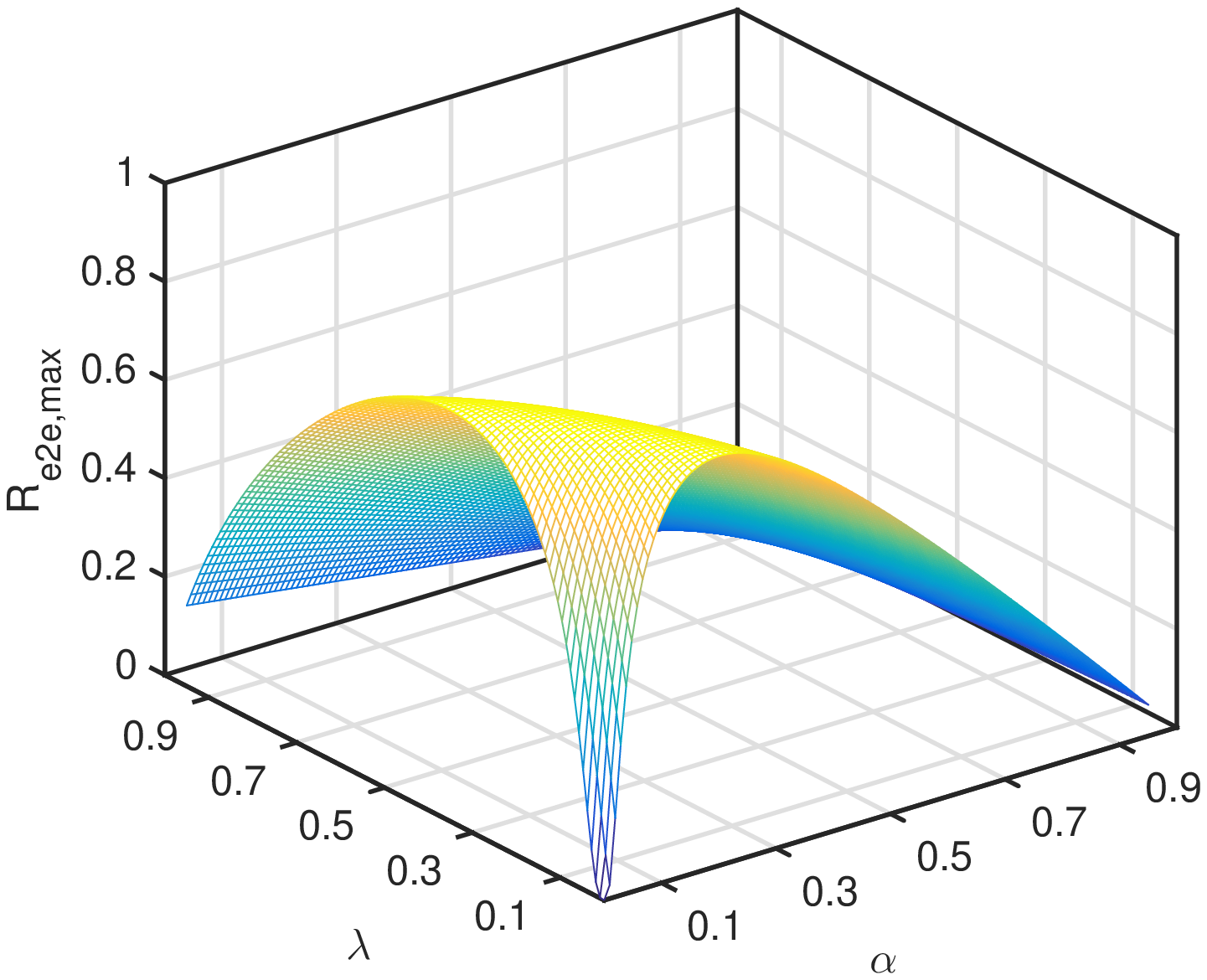}
		\caption{$R_{e2e,max}$ (in bits/sec/Hz) as a function of $\alpha$ and $\lambda$ with $\gamma_{s}=25 \ dBm$.}
		\label{the_alpha_lambda_vary_rate_test3_th_1db}	
	\end{minipage}
	\begin{minipage}[t]{0.45\textwidth}
		\includegraphics[scale=0.5]{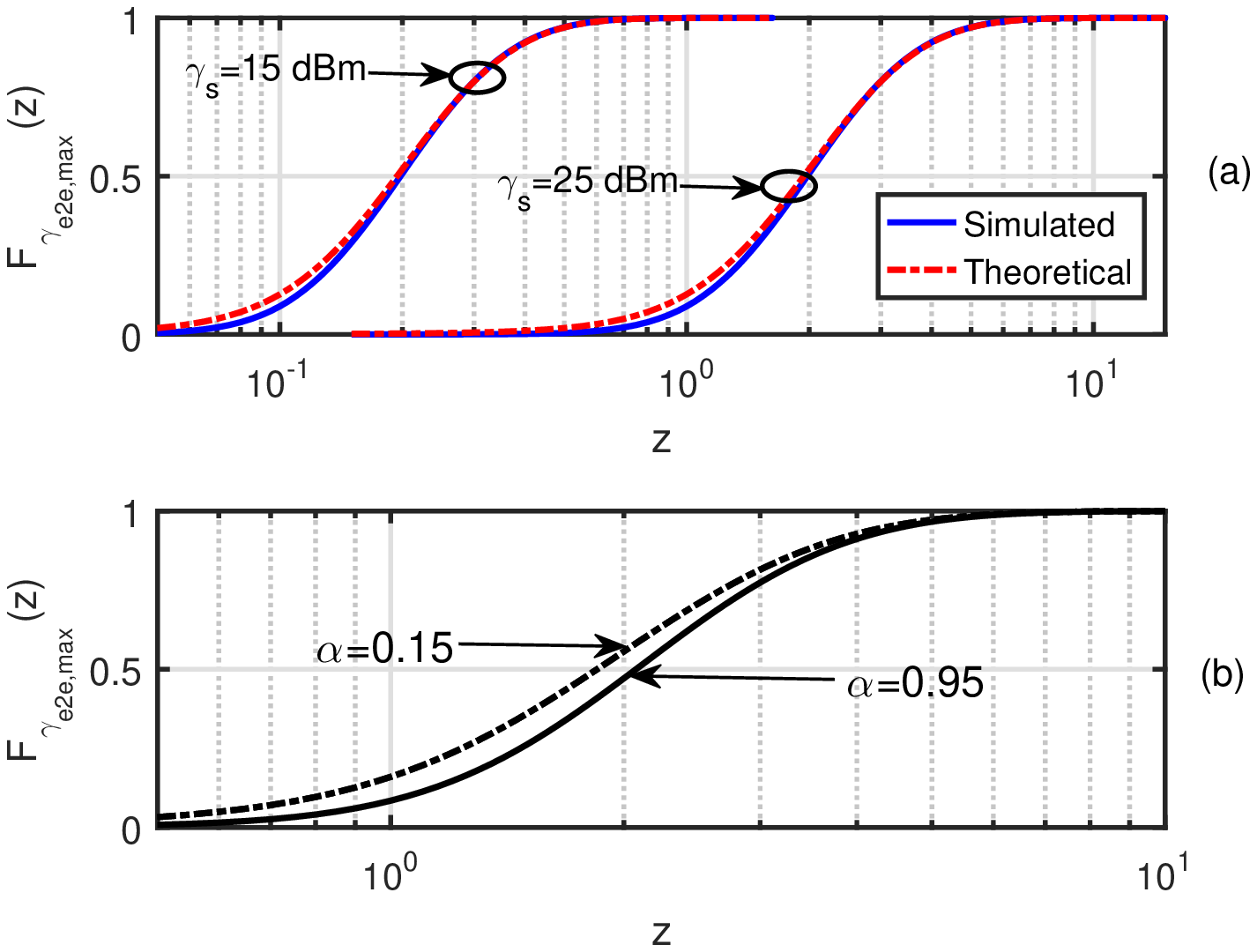}
		\caption{CDF of $\gamma_{e2e,max}$.}
		\label{order_test2}
	\end{minipage}%
\end{figure}
\subsection{Simulations for the results in Section \ref{ordering_e2esnr} and \ref{opti_probsection}}
{The ordering results in Section \ref{ordering_e2esnr} are verified in Fig \ref{order_test2} for $L=15$. For clarity in presentation, we have plotted only the theoretical curves of CDF in Fig \ref{order_test2} (b). Next, in Fig \ref{low_SNR_opti_soln1} and \ref{high_SNR_opti_soln1}, we demonstrate the solutions for the optimisation problem to choose the optimal TS and PS factors. Here the optimal solutions ($\alpha^*,\lambda^*$) are shown using a red star in the figure. In Fig \ref{low_SNR_opti_soln1} we show the log of outage probability and the corresponding choice of optimal $\alpha$ and $\lambda$ for $\gamma_s=4$ dBm and a threshold of $\gamma_{th} = 15 $ dBm. As discussed in Section \ref{opti_probsection}, for a low SNR scenario only one factor of each of the product terms in (\ref{order_lambda}) dominates and smaller values of $\lambda$ will increase the objective and hence decrease the outage. Hence, we propose that $\lambda=0$ will be a good initialisation. In fact, for the very low SNR scenario, we observe that the optimal choice corresponds to the TS relaying protocol. Also, it was observed that in these cases initialising $\lambda=0$ for the sqp algorithm reduces the number of iterations by half when compared to the number of iterations required for convergence when the initialisation is $\lambda=1$. This emphasises the utility of our ordering results and further reiterates the fact that the right initialisation can ensure faster convergence. }

\begin{figure}[t]
	\centering
	\begin{minipage}[t]{0.45\textwidth}
		\includegraphics[scale=0.45]{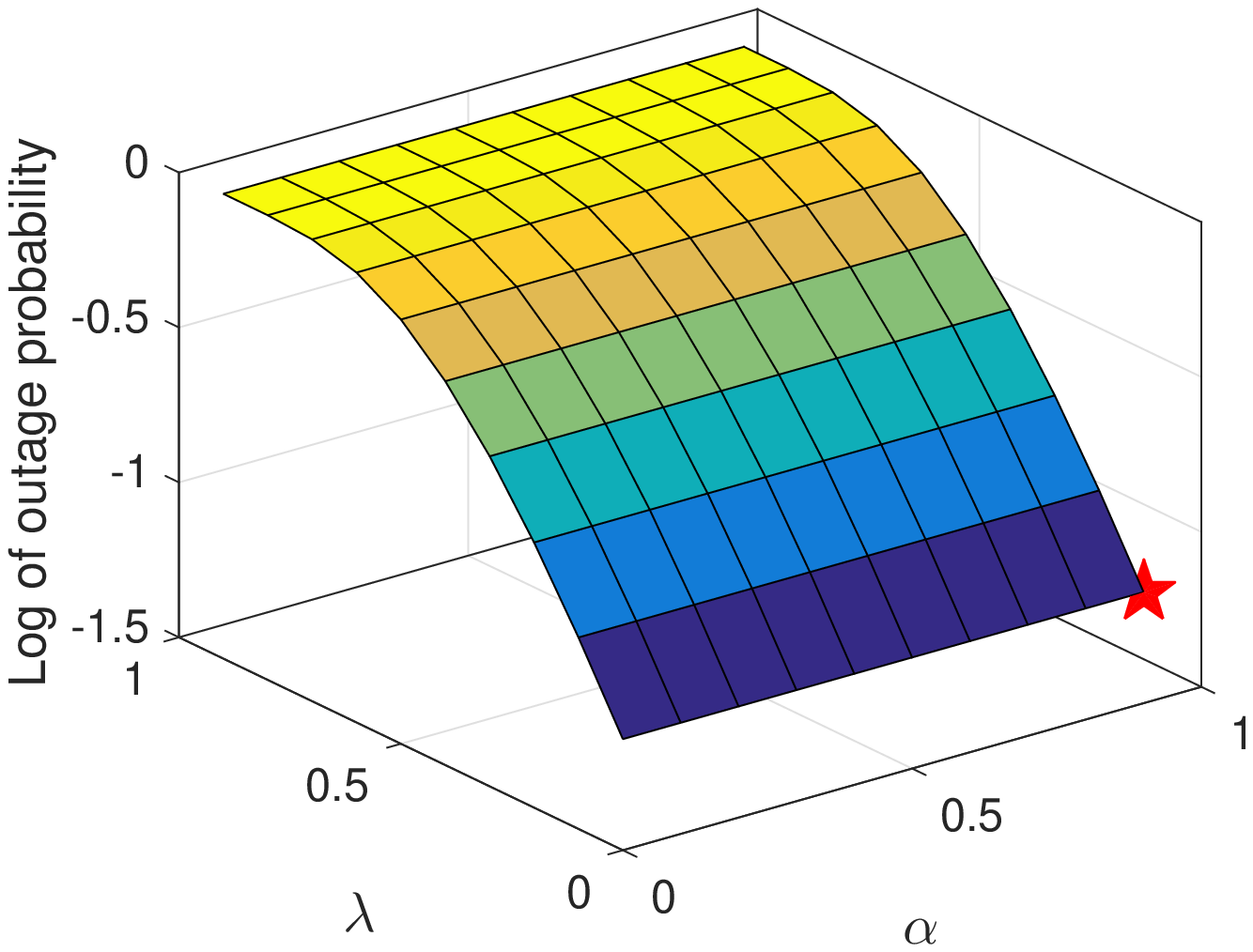}
		\caption{Log of outage probability as a function of $\alpha$ and $\lambda$ with $\gamma_s=4$ dBm. ($\alpha^*=0.9.\lambda^*=0$)}
		\label{low_SNR_opti_soln1}	
	\end{minipage}
	\begin{minipage}[t]{0.45\textwidth}
		\includegraphics[scale=0.45]{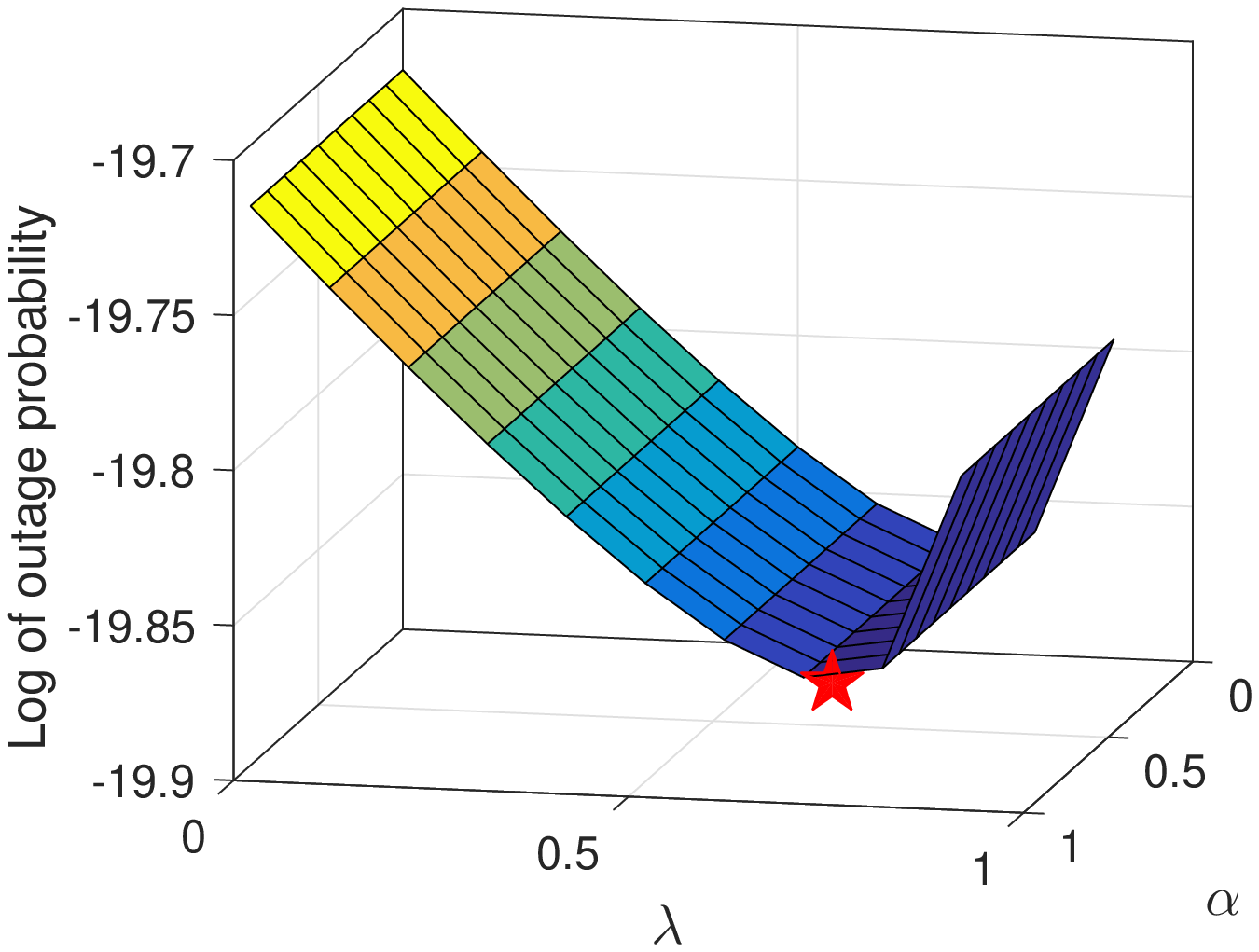}
		\caption{Log of outage probability as a function of $\alpha$ and $\lambda$ with $\gamma_s=40$ dBm. ($\alpha^*=0.9.\lambda^*=0.74$)}
		\label{high_SNR_opti_soln1}
	\end{minipage}%
\end{figure}
 
{Next, in Fig \ref{high_SNR_opti_soln1} we show another example of the optimisation problem for $\gamma_s=40$ dBm and the same threshold of $\gamma_{th} = 15 $ dBm. From the stochastic ordering results, we know that here optimal $\lambda$ can be away from zero but not equal to one as well. Hence, we decide that $\lambda=0.5$ can be a good initialisation. Here also we observe slightly faster convergence with this initialisation as compared to any other initialisation away from the optimal solution. For any other SNR scenario, we propose to use $\lambda=0.5$ as the initialisation since the solution has to be between the above two cases.}\\ 
Fig \ref{the_alpha_lambda_ce2e_max} shows variation of the asymptotic ergodic capacity with respect to the variations in the TS and PS factors for $\gamma_s=25$dBm. The red star represents the optimal values of $\alpha$ and $\lambda$ that maximises the asymptotic ergodic capacity according to the optimisation problem in Section IV-B. Similarly, we can solve the optimisation problem to maximise the asymptotic achievable throughput. Note that for the case of asymptotic achievable throughput, we cannot make a general conclusion regarding the sign of $\frac{\partial R_{e2e,max}}{\partial \alpha}$. Hence, we propose a 2-D grid search over the set of all possible values of TS and PS factors to arrive at the optimal values that maximise $R_{e2e,max}$.  Fig \ref{the_alpha_lambda_re2e_max} shows the simulation results for one such optimisation problem solved using the grid search method.
\begin{figure}[t]
	\centering
	\begin{minipage}[t]{0.4\textwidth}
		\centering
		\includegraphics[scale=0.45]{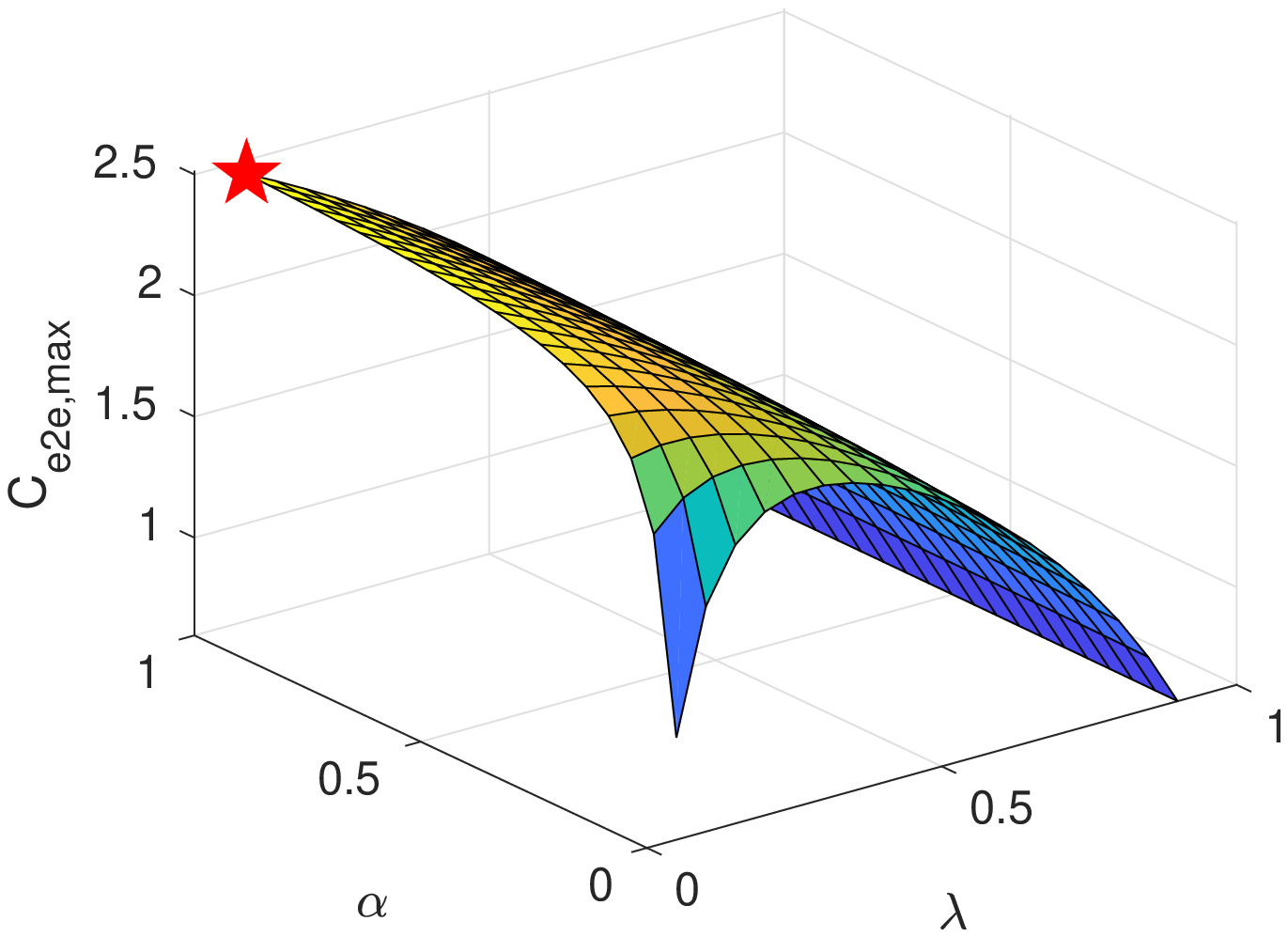}
		\caption{Variation in $C_{e2e,max}$ (in bits/sec/Hz) versus TS ($\alpha$) and PS ($\lambda$) factors for $\gamma_s=25$ dBm. ($\alpha^*=0.9.\lambda^*=0$)}
		\label{the_alpha_lambda_ce2e_max}
	\end{minipage}
	\begin{minipage}[t]{0.5\textwidth}
		\includegraphics[scale=0.45]{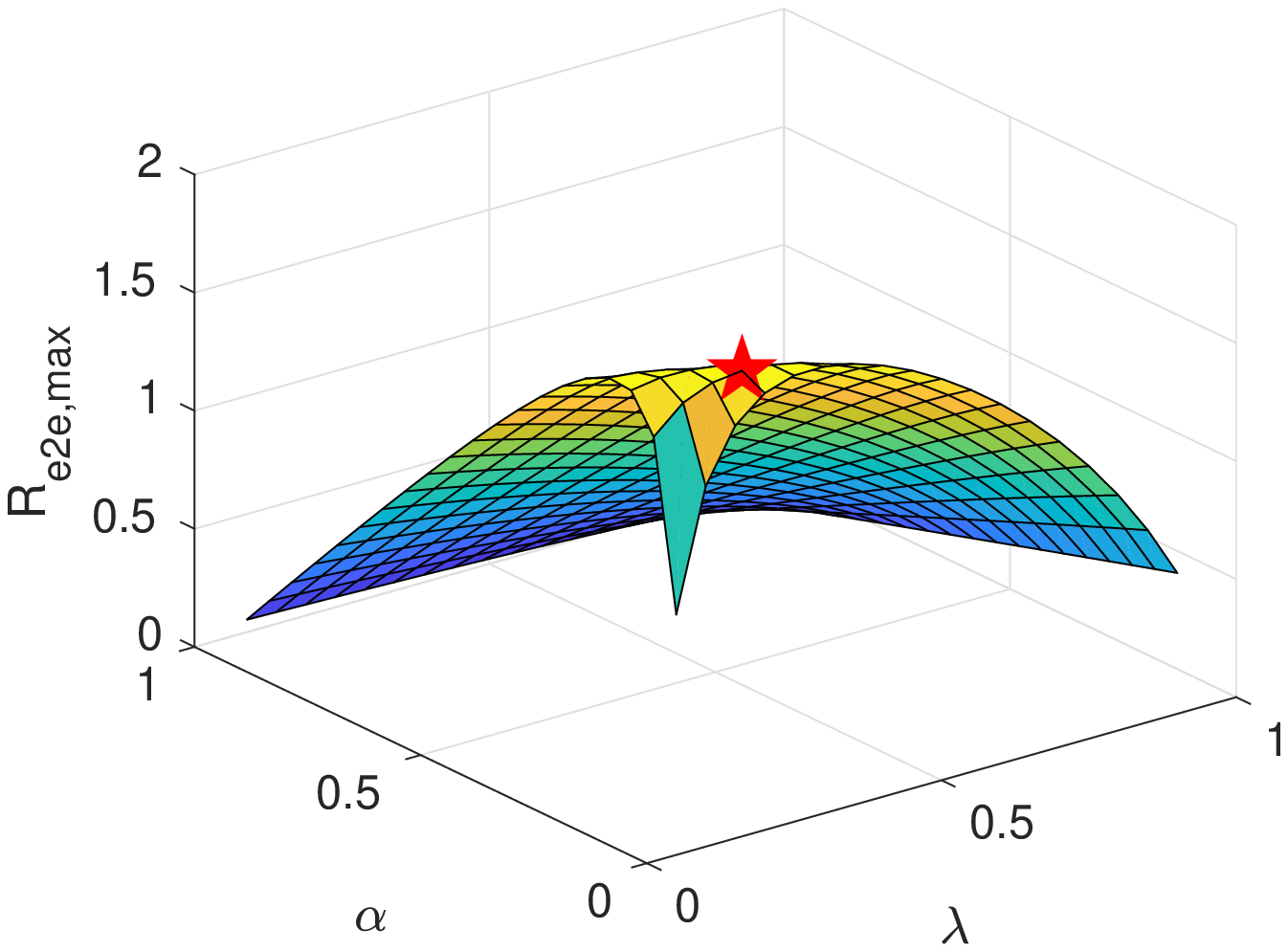}
		\caption{Variation in $R_{e2e,max}$ (in bits/sec/Hz) versus TS ($\alpha$) and PS ($\lambda$) factors for $\gamma_s=25$ dBm. ($\alpha^*=0,\lambda^*=0.2$)}
		\label{the_alpha_lambda_re2e_max}		
	\end{minipage}
 \end{figure}
\par For the outage-optimal TS and PS factors, we compare the outage probability for the three EH protocols in Fig \ref{compare_3_out}. Here we observe that the hybrid protocol and the TS protocol achieves identical performance in terms of the outage probability at the destination. Next, Fig  
\ref{compare_3_ce2e} compares the performance of the three protocols when the TS and PS factor that maximises the asymptotic ergodic capacity is chosen for the simulation. Here also we observe that the TS protocol achieves performance very close to the hybrid protocol. Thus, in a scenario where the system hardware constraints allows only PS protocol, we will have to use higher transmit power to achieve the same performance achievable via systems with the TS or hybrid protocol.

\begin{figure}[t]
	\centering
	\begin{minipage}[t]{0.4\textwidth}
		\includegraphics[scale=0.45]{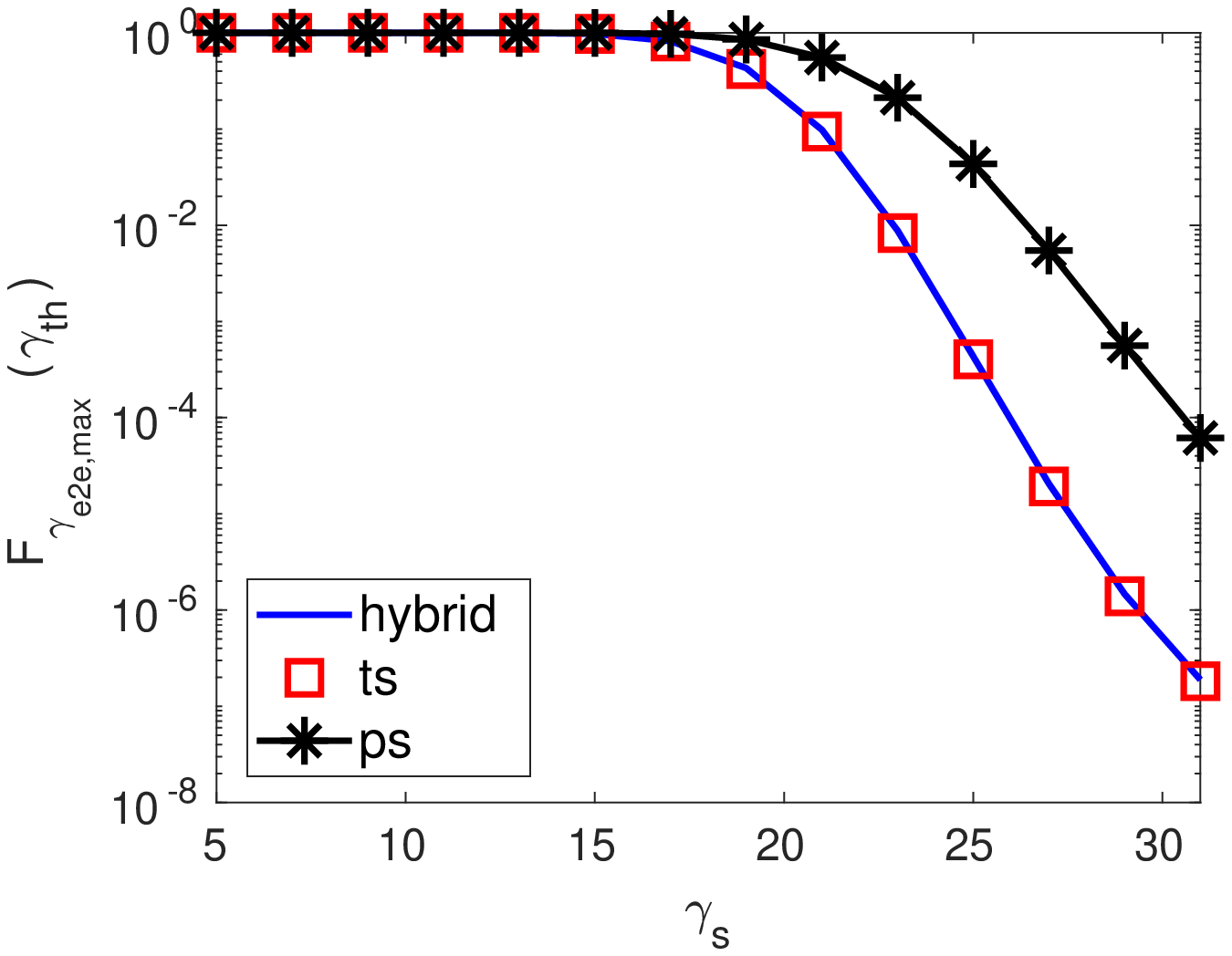}
		\caption{Outage probability vs SNR of the three EH protocols for outage-optimal TS and PS factors.  }
		\label{compare_3_out}	
	\end{minipage}
 \begin{minipage}[t]{0.4\textwidth}
		\centering
	\includegraphics[scale=0.45]{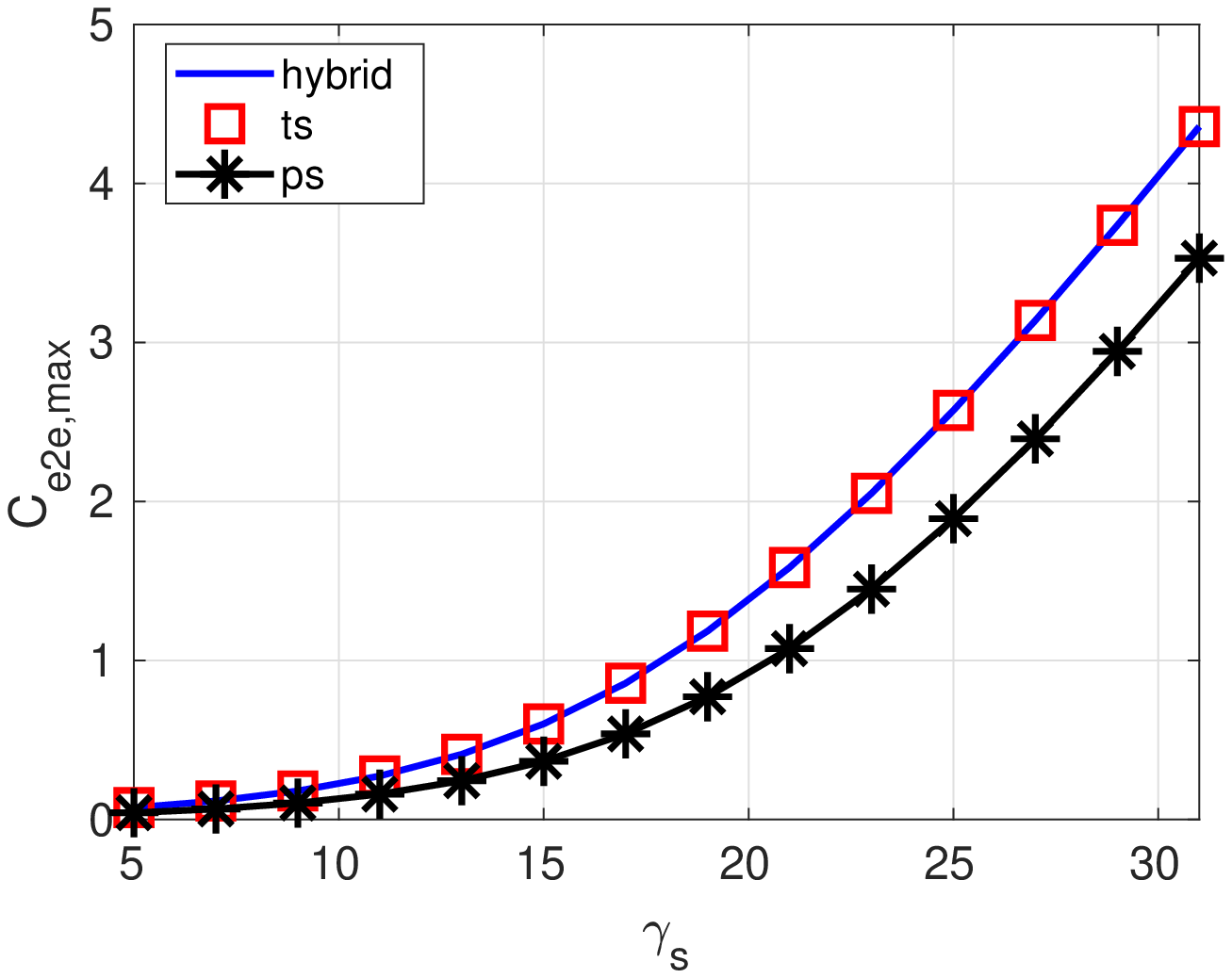}
		\caption{ $C_{e2e,max}$ (in bits/sec/Hz) vs SNR of the three EH protocols for ergodic capacity optimal TS and PS factors.}
	\label{compare_3_ce2e}
	\end{minipage}%
\end{figure}

\section{Conclusion and future work} \label{conclusion}
{In this paper, we derived the distribution of the maximum normalised e2e SNR at the destination node in a CR scenario. Using these results, we characterised the distribution of the maximum e2e SNR RV.} We considered opportunistic selection of the EH relay that maximised the e2e SNR. We demonstrated the viability of a particular choice of normalising constants to characterise this asymptotic distribution of the maximum over a sequence of normalised i.n.i.d. RVs using EVT. Furthermore, we showed the utility of these results in deciding the optimum TS and PS factors, which (i) minimised the outage probability and (ii) maximised the ergodic capacity, at the destination. The solution for this optimisation problem was simplified using results from stochastic ordering.
\par Interesting extensions of this work includes analysis of scenarios where all the available relays forward decoded information to the destination. Another relevant extension of this work would be to incorporate co-channel interference at the relays into the system model. While deteriorating the first hop SNR, co-channel interference at the relay nodes will contribute to the energy harvested. Hence, the trade off between loss in SNR and availability of energy for harvesting in such a scenario will be an interesting study.

\begin{appendices}
 \section{Proof for Theorem \ref{cdf_e2e_thm}} \label{cdf_e2e_proof}
Note that $\gamma_{e2e,\ell}=\gamma_{1,\ell}\min(1,\varphi_{2,\ell})$. Now, let $Y=\min(1,\varphi_{2,\ell})$. Then, the CDF of $Y$ is given by
\begin{equation}
    F_Y(y) =  \begin{cases}
		1, & y \geq 1,  \\
		1-exp(-y\nu_\ell), & 0\leq y \leq1, \\
		0 & \text{o.w}.
		\end{cases}
		\label{cdf_Y}
\end{equation} 
\newline
Hence, the CDF of $\gamma_{e2e,\ell}$ is given by $F_{\gamma_{e2e,\ell}}(\gamma)=\mathbb{P}(\gamma_{1,\ell}Y \leq \gamma)$. Thus, 
\begin{align}
    F_{\gamma_{e2e,\ell}}(\gamma)  = \int\limits_0^\infty F_Y\left(\frac{\gamma}{x_1} \right) f_{\gamma_{1,\ell}}(x_1) \ dx_1 .
\end{align}
Now from (\ref{cdf_Y}), $F_Y\left(\frac{\gamma}{x_1} \right)$ will be unity for all values of $x_1>\gamma$ and hence we can rewrite the previous integral as follows : 
\begin{align}
     F_{\gamma_{e2e,\ell}}(\gamma) & = \int\limits_0^\gamma f_{\gamma_{1,\ell}}(x_1) \ dx_1 + \int\limits_\gamma^\infty(1-\exp(-\nu_\ell  \gamma/x_1) \ f_{\gamma_{1,\ell}}(x_1) \ dx_1 \\
     & = \int\limits_0^\infty  f_{\gamma_{1,\ell}}(x_1) \ dx_1 + \int\limits_\gamma^\infty  \exp(-\nu_\ell  \gamma/x_1) f_{\gamma_{1,\ell}}(x_1) \ dx_1. \label{new}
\end{align}
Since $\gamma_{1,\ell} \sim \text{Exp}(\theta_\ell)$, $ F_{\gamma_{e2e,\ell}}(\gamma)$ can now be written as,
\begin{align}
    F_{\gamma_{e2e,\ell}}(\gamma) & = 1-\theta_\ell \int\limits_\gamma^\infty \exp\left(-\theta_\ell x_1-\frac{\nu_\ell \gamma}{x_1} \right) \ dx_1.
\end{align}

Now, expanding the second exponential, we have

\begin{align}
    F_{\gamma_{e2e,\ell}}(\gamma) & = 1-\theta_\ell  \int\limits_\gamma^\infty \exp(-\theta_\ell x_1) \sum\limits_{k=0}^\infty \frac{(-1)^k}{k!} \left(\frac{\nu_\ell \gamma}{x_1} \right)^k \ dx_1
\end{align}
By Fubini's theorem, we can exchange the integral and the summation in the previous expression we have,
\begin{align}
    F_{\gamma_{e2e,\ell}}(\gamma)  & =  1-\theta_\ell \left \lbrace \sum\limits_{k=0}^\infty  \frac{(-\nu_\ell \gamma)^k}{k!} {\int\limits_\gamma^\infty \exp(-\theta_\ell x_1) \ x_1^{-k} \ dx_1} \right \rbrace .
    \label{b_step7}
\end{align}
Next, by applying the transformation $y=\frac{x_1}{\gamma}$, (\ref{b_step7}) can be rewritten as 
\begin{align}
   F_{\gamma_{e2e,\ell}}(\gamma)   & =  1-\theta_\ell \left \lbrace \sum\limits_{k=0}^\infty  \frac{(-\nu_\ell \gamma)^k}{k!} {\int\limits_{1}^\infty \gamma \exp(-\theta_\ell y \gamma) \ (y \gamma)^{-k} \ dy} \right \rbrace .
    \label{b_step7}
\end{align}
Thus, we have,
\begin{align}
 F_{\gamma_{e2e,\ell}}(\gamma)    =1-\theta_\ell \sum\limits_{k=0}^ \infty \frac{(-\nu_\ell)^k}{k!} \gamma E_k(\theta_\ell \gamma),
    \label{cdf_e2e}
\end{align}
where $E_n(x)$ is the exponential integral function given by $E_n(x) = \int\limits_{1}^\infty \exp(-xt)t^{-n} \ dt.$
\end{appendices}
 
\bibliographystyle{IEEEtran}
\bibliography{reference,bibJournalList}
\end{document}